\theoremstyle{plain}
\newtheorem{theorem}{Theorem}
\newtheorem{prop}{Proposition}
\newtheorem{cor}{Corollary}
\newtheorem{example}{Example}
\theoremstyle{remark}
\newtheorem*{rmk}{Remark}
\newtheorem*{claim}{Claim}
\newcommand{\barr}{\begin{eqnarray}}
\newcommand{\earr}{\end{eqnarray}}
\newcommand{\be}{\begin{equation}}
\newcommand{\ee}{\end{equation}}
\def\Ord{\mathcal{O}}
\newcommand{\de}{\mathrm{d}}
\renewcommand{\d}[2]{\frac{d #1}{d #2}} 
\newcommand{\ket}[1]{\left| #1 \right>} 
\let\baraccent=\= 
\renewcommand{\=}[1]{\stackrel{#1}{=}} 
\newcommand{\numberset}{\mathbb}
\newcommand{\N}{\numberset{N}}
\newcommand{\R}{\numberset{R}}
\newcommand{\C}{\numberset{C}}
\newcommand{\Tr}{\mathrm{Tr}}
\newcommand{\diag}{\mathrm{diag}}
\newcommand{\as}{\mathrm{a.s.}}
\newcommand{\iid}{\mathrm{i.i.d.}}
\newcommand{\E}{\mathbf{E}}
\newcommand{\HH}{\mathcal{H}}
\newcommand{\hc}{\mathrm{h.c.}}
\def\d{\dagger}
\begin{document}

\title[Quadratic forms of Fermi operators]{Density and spacings for the energy levels of quadratic Fermi operators}

\author[Cunden]{Fabio Deelan Cunden}
\address[Fabio Deelan Cunden]{School of Mathematics, University of Bristol, University Walk, Bristol BS8 1TW, United Kingdom}
\author[Maltsev]{Anna Maltsev}
\address[Anna Maltsev]{School of Mathematical Sciences, 
Queen Mary University of London, 
London E1 4NS, 
United Kingdom}
\author[Mezzadri]{Francesco Mezzadri}
\address[Francesco Mezzadri]{School of Mathematics, University of Bristol, University Walk, Bristol BS8 1TW, United Kingdom}

\date{\today}
\maketitle

\begin{abstract}
The work presents a proof of convergence of the density of energy levels to a Gaussian distribution for a wide class of quadratic forms of Fermi operators. This general result applies also to quadratic operators with disorder, e.g., containing random coefficients. The spacing distribution of the unfolded spectrum is investigated numerically. For generic systems the level spacings behave as the spacings  in a Poisson process. Level clustering persists in presence of disorder.
\end{abstract}

\tableofcontents

\section{Introduction}
In a variety of situations one encounters quadratic forms in Fermi operators
\be
\HH_n=\sum_{i,j=1}^n A_{ij}{c_i}^\d c_j+\frac{1}{2}B_{ij}(c_i c_j-{c_i}^\d {c_j}^\d),\label{eq:quadratic}
\ee
where the Fermi operators $c_i$'s obey the canonical anticommutation relations $\{c_i,c_j\}=0$, $\{c_i,{c_j}^\d\}=\delta_{ij}$, and the coefficients satisfy $A_{ij}=A_{ji}\in\R$, $B_{ij}=-B_{ij}\in\R$ for $ i,j=1,2,\dots$.
The quadratic form~\eqref{eq:quadratic} defines a symmetric operator $\HH_n$ acting on a Hilbert space of dimension $2^n$. This operator represents the Hamiltonian of a system of quasifree fermions.

Such quadratic operators are of fundamental interest for several reasons. First and foremost, these operators can be diagonalized exactly using an explicit normal modes decomposition~\cite{Lieb61} and certain quadratic Hamiltonians are good approximations for more complicated two-body interactions. Furthermore, fermionic models share a close relationship  to interacting spins in dimension one,  and they are among the simplest systems in which quantum phase transitions occur and entanglement measures can be computed. The literature on quasifree fermions, their relationship to spin systems, as well as to other areas of physics such as conformal field theory and random matrix theory is immensely vast. For a review, see~\cite{Amico08,Calabrese09,Eisert10,Keating04}. In the last decades, models with disordered, i.e., containing random parameters, have also been considered~\cite{Fisher94,Refael04,Prosen08,Pastur14}.

For quadratic forms in Fermi operators it is of interest to know whether, in the limit of large $n$, the density of energy levels and the spacing distribution  converge and to identify the limit. Of course, the spacing distribution of the unfolded spectrum requires knowledge of the density of energy levels. We discuss these questions for a broad class of systems that includes the following examples.

\begin{enumerate}
\item[(i)] The standard diagonalisation scheme for spin $1/2$ systems is centred on the Jordan-Wigner transformation, which is based on the observation that there exists a unitary mapping between the Hilbert space $(\C^2)^{\otimes n}$ of $n$  spin $1/2$'s and the antisymmetric Fock space $\mathcal{F}_{-}(\C^n)$ of spinless fermions on $n$ sites. For instance, the $XY$ model consists of $n$ spin $1/2$'s ($n$ even) arranged in a chain and having only nearest neighbour interactions $(1/2)\sum_i(1+\gamma)\sigma_i^x\sigma_{i+1}^x+(1-\gamma)\sigma_i^y\sigma_{i+1}^y$, where $\sigma^x$ and $\sigma^y$ may be represented by the usual Pauli matrices ($\hbar=1$). Using the Jordan-Wigner map, the model can be cast as a quadratic form in Fermi operators~\cite{Lieb61}
\be
\HH_n^{XY}=\sum_{i=1}^{n-1}[({c_i}^\d c_{i+1}+\gamma c_i^\d {c_{i+1}}^\d)+\hc].\label{eq:HXYfermions}
\ee
This representation is exact in the case of spin chain with free ends.
\item[(ii)] A quantum bond percolation model on a lattice $\Gamma$ with $n$ sites consists of a tight-bind Hamiltonian of the form \cite{Berkovits96,Sen09,Schmidtke14}
\be
\HH_n^{perc}=\sum_{<ij>}[t_{ij}{c_i}^\d c_{j}+\hc],\label{eq:percolation}
\ee
where the summation runs over nearest neighbour sites and the hopping matrix elements $t_{ij}\in\R$ are independent Bernoulli random variables $\Pr(t_{ij}=1)=1-\Pr(t_{ij}=0)=p\in(0,1)$.
\item[(iii)] The Anderson model \cite{Anderson58} is one of the simplest models incorporating the essential competition between the
hopping term (discrete Laplacian) and the on-site disorder (random potential). For a generic lattice $\Gamma$ the Anderson Hamiltonian for noninteracting fermions can be written
\be
\HH_n^{And}=\sum_{i}v_ic_i^\d c_i+t\sum_{<ij>}[{c_i}^\d c_{j}+\hc],\label{eq:Anderson}
\ee
with random on-site potential $v_i\in\R$; usually $v_i$'s are independent with mean zero and finite variance $W^2$.
\item[(iv)] More general non-sparse random quasifree fermions Hamiltonian. For instance one can consider the Hamiltonian~\eqref{eq:quadratic} with $A_{ij}$, $B_{ij}$ independent Gaussian variables, modulo the symmetries $A_{ij}=A_{ji}$ and $B_{ij}=-B_{ji}$. It turns out that this model is related to the real Ginibre ensemble of random matrix theory~\cite{Ginibre65}. 
\end{enumerate}

In the traditional paradigm of condensed matter physics, the number of particles is so large that questions on the macroscopic density of energy levels, i.e., the behaviour of the energy levels in the `bulk' very far from the ground state, are meaningless. The situation has changed recently. Over the past few years, experimental developments have allowed the study of systems  with a small and controlled number of particles
and therefore, a direct measure of the level density might be within reach of
current experimental capabilities.

These considerations have recently triggered the attention of some authors on the problem of convergence and universality of the limiting level density of many body systems. 
In two pioneering papers, Hartmann, Mahler and Hess~\cite{Hartmann04} considered generic many body quantum systems with nearest neighbour interaction. They proved that, provided that the energy per particle has an upper bound, the energy distribution for almost every product state becomes a Gaussian in the limit of infinite number of particles.
More recently, Atas and Bogomolny~\cite{Atas14,Atas14b} investigated numerically and theoretically the energy levels of several interacting spin $1/2$ systems and concluded that the density of levels converges to a Gaussian. Using an adaptation
of the line of reasoning in~\cite{Hartmann04}, Keating, Linden and Wells~\cite{Keating14,Keating15,Wells14} proved convergence to a Gaussian distribution for spin chains with generic pair interactions, including the case of spin glasses, i.e., interaction with random couplings. This result has been extended to spin systems on more general graphs by Erd\"{o}s and Schr\"{o}der~\cite{Erdos14}. 
The algebraic identities satisfied by the Pauli matrices representing spin $1/2$'s play a key role in the proofs in~\cite{Keating14,Wells14,Keating15,Erdos14}.

Our goal here is to show that the density of energy levels of a wide class of quadratic Fermi operators (both deterministic and random) converges to a Gaussian distribution in the limit of large $n$. The proof of this universal result relies on the connection between the spectrum of $\HH_n$ and the subset-sum structure arising in the normal modes decomposition. This result explains some of the previous conjectural statements and numerical observations by Atas and Bogomolny on the level density of certain (nonrandom) spin systems. Additionally we provide a uniform bound (based on a Berry-Esseen inequality) on the rate of convergence. 

 We also consider the level spacing distribution of such operators. Numerical investigation shows that both deterministic and random models exhibit level clustering (Poisson statistics); this behavior is compatible with the celebrated Berry-Tabor philosophy for generic integrable systems~\cite{Berry77}, even in presence of disorder.
In the course of the paper we also present a few short examples illustrating the general theorems.

The paper is organised as follows. In Section~\ref{sec:setup} we set the notation and review the consequences of the normal modes decomposition. Then, in Section~\ref{sec:results} we present our main results on the limiting density of energy levels and the rate of convergence to the limit. In Sections~\ref{sec:spin}, \ref{sec:perc}, \ref{sec:Ginibre} and \ref{sec:band} we apply the  general theorems to the examples (i), (ii), (iii) and (iv) discussed above, thus illustrating in physical models the universality of Theorems~\ref{thm:DEL} and Corollary~\ref{thm:DELrandom}. Finally, in Section~\ref{sec:spacings} we present the numerical observations on the level spacing distribution.

\emph{Notation.}  We shall denote by $r_k$ ($k=1,2,\dots$) a collection of $\iid$ binary variables with $\Pr(r_k=1/2)=\Pr(r_k=-1/2)= 1/2$. Expectation with respect to the $r_k$'s  will be denoted by $\mathbf{E}[\cdot]$. By  $\displaystyle\|\cdot\|_{\mathrm{op}}$ we shall indicate the usual operator norm (the largest singular value). The projection onto the first $n$ coordinates will be denoted by $P_n=\diag(\underbrace{1,1,\dots,1}_{\text{$n$ times}},0,0,\dots)$.

\section{Generalities on Fermi operators}\label{sec:setup}
Let us order the $2^n$ eigenvalues of $\HH_n$ as
\be
E_{1,n}\leq E_{2,n}\leq\cdots\leq E_{2^n-1,n}\leq E_{2^n,n}.\label{eq:ordH}
\ee
For a quadratic Hamiltonian~\eqref{eq:quadratic} it is possible to write a normal modes decomposition. More precisely, using a canonical transformation~\cite[Appendix A]{Lieb61} the operator $\HH_n$ can be written as
\be
\HH_n=\sum_{k=1}^n\lambda_{k,n}\left(\eta_k^\d\eta_k-\frac{1}{2}\right)+K_n, \label{eq:normalmodes} 
\ee
where the \emph{normal modes} $\eta_k,\eta_k^\d$ are Fermi operators, the \emph{elementary excitations} $\lambda_{k,n}\geq0$ are the singular values of $P_n(A+B)P_n$ and $K_n=\Tr P_nAP_n/2$.

The following well-known properties of the Fermi operators $\eta_k$, $\eta_k^\d$ are immediate consequences of the canonical anticommutation relations~\cite{Nielsen05}. First, the $\eta_k^\d\eta_k$ are Hermitian operators with eigenvalues $0$ and $1$. Second, $\eta_k$ ($\eta_k^\d$) acts as a lowering (raising) operator on the normalised eigenvectors of $\eta_k^\d\eta_k$ with eigenvalue $1$ ($0$). Moreover, the $\eta_k^\d\eta_k$'s form a set of mutually commuting operators and therefore they can be simultaneously diagonalised. These three facts imply that there exists a normalised vector $\ket{0}$ (the vacuum state) which is an eigenvector of all the $\eta_k^\d\eta_k$'s with corresponding eigenvalue zero:  $\eta_k^\d\eta_k\ket{0}=0$. A set of $2^n$ normalised eigenvectors of $\eta_k^\d\eta_k$ ($k=1,\dots,n$) can be built up by exciting the vacuum state; the normalised vector $\ket{\alpha_1\alpha_2\cdots\alpha_n}=(\eta_1^\d)^{\alpha_1}(\eta_2^\d)^{\alpha_2}\cdots(\eta_n^\d)^{\alpha_n}\ket{0}$ with $\alpha_k=0$ or $1$, is an eigenvector of $\eta_k^\d\eta_k$ with eigenvalue $\alpha_k$. Therefore, from~\eqref{eq:normalmodes}  we have
\be
\HH_n\ket{\alpha_1\alpha_2\cdots\alpha_n}=\left(K_n+\sum_{k=1}^n\alpha_k\lambda_{k,n}-\frac{1}{2}\sum_{k=1}^n\lambda_{k,n}\right)\ket{\alpha_1\alpha_2\cdots\alpha_n}.
\ee
The spectrum of $\HH_n$ is constructed by exciting the ground state energy $E_{1,n}=K_n-1/2\sum_{k}{\lambda_{k,n}}$ by the elementary excitations $\lambda_{k,n}$.
Hence the spectrum is characterised in terms of the \emph{subset sums} of elementary excitations as follows: $E$ is an eigenvalue of $\HH_n$ if and only if
\be
\exists S\subseteq\{1,\dots,n\}\quad \text{such that}\quad E=K_n+\frac{1}{2}\left(\sum_{k\in S}\lambda_{k,n}-\sum_{k\notin S}\lambda_{k,n}\right).\label{eq:sp(H)}
\ee
The density of energy levels is defined as the empirical normalised measure
\be
\frac{1}{2^n}\sum_{k=1}^{2^n}\delta(E-E_{k,n}),\label{eq:counting1}
\ee
and from~\eqref{eq:sp(H)} it follows that
\be
\frac{1}{2^n}\sum_{k=1}^{2^n}\delta(E-E_{k,n})=\frac{1}{2^n}\sum_{r_1,\dots,r_n\in\{\pm \frac{1}{2}\}}\delta\left(E-\sum_{k=1}^n r_k\lambda_{k,n}-K_n\right).\label{eq:emp2}
\ee
Up to a shift, the empirical measure of $E_{k,n}$ is given by the distribution of the sum of $n$ independent variables $r_1\lambda_{1,n},\dots,r_n\lambda_{n,n}$.
In fact, it is possible to compute the Fourier transform of~\eqref{eq:emp2}:
\barr
\int\frac{1}{2^n}\sum_{k=1}^{2^n}\delta(E-E_{k,n})e^{itE}\de E&=&e^{i tK_n}\prod_{k=1}^n\E[e^{i t r_k\lambda_{k,n}}]\nonumber\\
&=&e^{i tK_n}\prod_{k=1}^n\left(\frac{1}{2}e^{it \lambda_{k,n}/2}+\frac{1}{2}e^{-it\lambda_{k,n}/2}\right)\nonumber\\
&=&e^{i tK_n}\prod_{k=1}^n\cos\left(\frac{t\lambda_{k,n}}{2}\right).\label{eq:comp_fourier}
\earr
This computation shows that the empirical distribution of the energy levels $E_k$ is the distribution of a sum of independent random variables.
It is then plausible that for large $n$, after a suitable rescaling, the distribution of  energy levels converges to a Gaussian. After all, the many body Hamiltonian~\eqref{eq:normalmodes} is a sum of single particle (commuting) operators and the total spectrum is given by the sum of the individual spectra. In the following section we specify exact conditions for this convergence. Note that the variables $r_k\lambda_k$'s are independent but not identically distributed, e.g. $\E[r_k\lambda_{k,n}]=0$ and $\E[(r_k\lambda_k)^2]=\lambda_{k,n}^2/4$. 

Before stating the main theorems we conclude this section with a last computation to prepare the ground to what follows. If we knew that the limiting level density is Gaussian then the limit would be identified by its mean and variance. The moments of the counting measure~\eqref{eq:counting1} are related to traces of powers of $\HH_n$ by the following identity
\be
\frac{1}{2^n}\Tr \HH_n^p=\int\frac{1}{2^n}\sum_{k=1}^{2^n}\delta(E-E_{k,n})E^p \de E.
\ee
In particular, mean and variance are given by the traces of the first two powers $\Tr\HH_n$ and $\Tr\HH_n^2$. A direct computation of these traces is possible using Wick's calculus. The only non-traceless products of Fermi operators that we need are
\barr
\Tr(c_i^\d c_j)&=&2^{n-1}\delta_{ij},\label{eq:id1}\\
\Tr(c_i^\d c_jc_k^\d c_l)&=&2^{n-2}(\delta_{ij}\delta_{kl}+\delta_{il}\delta_{jk}),\label{eq:id2}\\
\Tr(c_i c_jc_k^\d c_l^\d)&=&2^{n-2}(\delta_{il}\delta_{jk}-\delta_{ik}\delta_{jl})\label{eq:id3};
\earr
using~\eqref{eq:id1}-\eqref{eq:id3} one finds
\be
\frac{1}{2^n}\Tr\HH_n=\frac{1}{2}\sum_{i=1}^nA_{ii},\,\,\text{and}\,\,\,
\frac{1}{2^n}\left(\Tr\HH_n^2-(\Tr\HH_n)^2\right)=\frac{1}{4}\sum_{i,j=1}^n(A_{ij}^2+B_{ij}^2).\label{eq:finitevar}
\ee
The above quantities  are mean and variance of the finite-$n$ level density.
\section{Main results}\label{sec:results}
\begin{theorem}[Density of energy levels]\label{thm:DEL} Let $\HH_n$ be the quadratic form~\eqref{eq:quadratic}. Assume that, denoting $X_n=\displaystyle P_n(A+B)P_n$, the following conditions are true:
\begin{itemize}
\item[i)] $\displaystyle\lim_{n\to\infty}n^{-1/4}\displaystyle\|X_n\|_{\mathrm{op}}=0$;
\item[ii)] $\displaystyle\lim_{n\to\infty}\frac{1}{4n}\Tr (X_n^{T}X_n)=\sigma^2<\infty$.
\end{itemize}
Then, the density of shifted and rescaled energy levels
\be
\nu_n(E)=\frac{1}{2^n}\sum_{k=1}^{2^n}\delta\left(\frac{E_{k,n}-K_n}{\sqrt{n}}-E\right)
\ee
weakly converges, as $n\to\infty$,  to a centred Gaussian probability measure with  variance $\sigma^2$:
\be
\de\nu_n(E)\rightharpoonup\frac{1}{\sqrt{2\pi\sigma^2}}e^{-\frac{E^2}{2\sigma^2}}\de E. \label{eq:them_conv}
\ee
\end{theorem}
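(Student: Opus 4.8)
The plan is to prove convergence of characteristic functions and then invoke L\'evy's continuity theorem. By~\eqref{eq:emp2} the probability measure $\de\nu_n$ is the law of $n^{-1/2}\sum_{k=1}^n r_k\lambda_{k,n}$, so rescaling the Fourier transform already computed in~\eqref{eq:comp_fourier} (replace $t$ by $t/\sqrt n$ and cancel the deterministic phase $e^{itK_n/\sqrt n}$) gives
\be
\wh\nu_n(t):=\int e^{itE}\,\de\nu_n(E)=\prod_{k=1}^n\cos\!\left(\frac{t\lambda_{k,n}}{2\sqrt n}\right).
\ee
It therefore suffices to show that $\wh\nu_n(t)\to e^{-\sigma^2 t^2/2}$ for every fixed $t\in\R$, since the right-hand side is the characteristic function of $\NN(0,\sigma^2)$ (understood as $\delta_0$ when $\sigma^2=0$).

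The key observation is that the arguments of the cosines are \emph{uniformly} small: as $\lambda_{k,n}\le\|X_n\|_{\mathrm{op}}$ for every $k$, we have $\max_{1\le k\le n}\bigl|t\lambda_{k,n}/(2\sqrt n)\bigr|\le|t|\,\|X_n\|_{\mathrm{op}}/(2\sqrt n)$, which tends to $0$ by hypothesis~(i) (indeed it is $o(n^{-1/4})$). Hence for $n$ large every factor is positive, we may take logarithms, and the elementary expansion $\log\cos x=-\tfrac12 x^2+O(x^4)$ applies uniformly over the relevant range of $x$. Using $\sum_k\lambda_{k,n}^2=\Tr(X_n^TX_n)$ this yields
\be
\log\wh\nu_n(t)=-\frac{t^2}{2}\cdot\frac{1}{4n}\Tr(X_n^TX_n)+O\!\left(\frac{t^4}{n^2}\sum_{k=1}^n\lambda_{k,n}^4\right).
\ee
The first term converges to $-\sigma^2 t^2/2$ by hypothesis~(ii). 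For the remainder one bounds $\sum_k\lambda_{k,n}^4\le\|X_n\|_{\mathrm{op}}^2\sum_k\lambda_{k,n}^2=\|X_n\|_{\mathrm{op}}^2\Tr(X_n^TX_n)$, so the error is $O\bigl(t^4\,(\|X_n\|_{\mathrm{op}}^2/n)\cdot(\Tr(X_n^TX_n)/(4n))\bigr)$; here the second factor is bounded by~(ii) while $\|X_n\|_{\mathrm{op}}^2/n=o(n^{-1/2})\to0$ by~(i), so the whole remainder vanishes. Thus $\log\wh\nu_n(t)\to-\sigma^2 t^2/2$, hence $\wh\nu_n(t)\to e^{-\sigma^2 t^2/2}$, and L\'evy's continuity theorem gives the claimed weak convergence.

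At bottom this argument is the Lindeberg--Feller central limit theorem for the triangular array $\{r_k\lambda_{k,n}/\sqrt n\}_{k=1}^n$: the sum of the row variances tends to $\sigma^2$ by~(ii), and the Lindeberg condition (equivalently Lyapunov with exponent $2+\delta$) follows from~(i) since the array entries are uniformly $o(1)$; one could simply quote that theorem instead of manipulating characteristic functions directly. There is no genuine obstacle in the proof --- the only delicate point is the control of the quartic remainder above, which is precisely what hypothesis~(i) is designed to supply. It is worth remarking that the weaker condition $\|X_n\|_{\mathrm{op}}=o(\sqrt n)$ already suffices for the weak convergence stated here; the stronger $o(n^{1/4})$ is what will later underpin the uniform (Berry--Esseen) rate of convergence.
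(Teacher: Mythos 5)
Your proposal is correct and follows essentially the same route as the paper: both identify the characteristic function of $\nu_n$ with the product $\prod_{k=1}^n\cos\left(\frac{t\lambda_{k,n}}{2\sqrt{n}}\right)$, pass to the exponential via a second-order expansion controlled by hypotheses (i) and (ii), and conclude by L\'evy's continuity theorem. The only difference is organisational --- you take logarithms of the cosines directly with $\log\cos x=-x^2/2+O(x^4)$, whereas the paper routes the same estimate through a general lemma on products $\prod_i(1+u_i/n)\to e^S$ --- and your sharper bookkeeping of the quartic remainder (bounding $\sum_k\lambda_{k,n}^4$ by $\|X_n\|_{\mathrm{op}}^2\Tr(X_n^TX_n)$ rather than by $n\max_k\lambda_{k,n}^4$) correctly shows that the weaker condition $\|X_n\|_{\mathrm{op}}=o(\sqrt{n})$ already suffices for the weak convergence, a point the paper's lemma as stated does not yield.
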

Theorem~\ref{thm:DEL} can be proved by checking the Feller-Lindeberg conditions~\cite{Feller71} in the central limit theorem for independent nonidentical random variables. We present however a more direct proof based on elementary computations.
Hypothesis i) of Theorem~\ref{thm:DEL} can be rephrased as
\be
\lambda_{k,n}=o(n^{1/4})\quad\text{for all $k$},
\ee
meaning that the elementary excitations do not grow too fast with $n$. This assumption is similar (but in sense weaker) to the condition of finite energy per particle in Hartmann, Mahler and Hess theorem~\cite{Hartmann04}.  Note also that
\be
\sigma^2=\lim_{n\to\infty}\frac{1}{4n}\Tr (X_n^{T}X_n)=\lim_{n\to\infty}\frac{1}{4n}\sum_{k=1}^n\lambda^{2}_{k,n}=\lim_{n\to\infty}\frac{1}{4n}\sum_{i,j=1}^n(A_{ij}^2+B_{ij}^2),
\ee
according to~\eqref{eq:finitevar}. Hypothesis ii) is thus a condition on the second moment of the density of energy levels.
\begin{proof}[Proof of Theorem~\ref{thm:DEL}]  Let $\lambda_{k,n}$ ($k=1,\dots,n$) be the singular values of $X_n$. Repeating the computation in~\eqref{eq:comp_fourier} we find
\be
\int e^{itE}\de \nu_n(E)=\prod_{k=1}^n\cos\left(\frac{t\lambda_{k,n}}{2\sqrt{n}}\right). \label{eq:proof1}
\ee
The key point to appraise~\eqref{eq:proof1} is the following identity.
\begin{claim}
Let $(u_i)_{i\in\N}$ be a sequence of complex numbers such that
\be
\lim_{n\to\infty}n^{-1/2}\max_{1\leq i\leq n}|u_i|=0, \label{eq:assmp1}
\ee
 and the following limit exists and is finite
\be
S=\lim_{n\to\infty}\frac{1}{n}\sum_{i=1}^{n}u_i.\label{eq:assmp2}
\ee
Then
\be
\lim_{n\to\infty}\prod_{i=1}^{n}\left(1+\frac{u_i}{n}\right)=e^S.\label{eq:prod_exp}
\ee
(A generalization of the identity $\lim\limits_{n\to\infty}\left(1+\frac{u}{n}\right)^n= e^u$.)
\end{claim}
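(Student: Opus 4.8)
The plan is to pass to logarithms, turning the product into a sum to which the two hypotheses apply directly, exploiting that~\eqref{eq:assmp1} makes every factor uniformly close to $1$. First I would set $\varepsilon_n=n^{-1/2}\max_{1\le i\le n}|u_i|$, so that $\varepsilon_n\to0$ and $|u_i/n|\le n^{-1/2}\varepsilon_n$ for every $i\le n$; choosing $n_0$ with $n^{-1/2}\varepsilon_n\le1/2$ for $n\ge n_0$, each factor $1+u_i/n$ then lies in the disc $\{|z-1|\le1/2\}$, where the principal logarithm $\operatorname{Log}$ is single valued and inverts $\exp$, so that
\be
\prod_{i=1}^n\Bigl(1+\frac{u_i}{n}\Bigr)=\exp\Bigl(\sum_{i=1}^n\operatorname{Log}\Bigl(1+\frac{u_i}{n}\Bigr)\Bigr).
\ee
By continuity of $\exp$ it is then enough to show $\sum_{i=1}^n\operatorname{Log}(1+u_i/n)\to S$.

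The second step is the elementary estimate $|\operatorname{Log}(1+z)-z|\le2|z|^2$ for $|z|\le1/2$, obtained by bounding the Taylor tail $\sum_{k\ge2}(-1)^{k-1}z^k/k$ by $|z|^2/(1-|z|)$. I would then use the decomposition
\be
\sum_{i=1}^n\operatorname{Log}\Bigl(1+\frac{u_i}{n}\Bigr)=\frac1n\sum_{i=1}^n u_i+\sum_{i=1}^n\Bigl(\operatorname{Log}\Bigl(1+\frac{u_i}{n}\Bigr)-\frac{u_i}{n}\Bigr),
\ee
in which the first term converges to $S$ by~\eqref{eq:assmp2}, while the remainder is bounded in modulus by $2\sum_{i=1}^n|u_i/n|^2\le 2n(n^{-1/2}\varepsilon_n)^2=2\varepsilon_n^2\to0$ by~\eqref{eq:assmp1}. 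Hence the sum converges to $S$ and~\eqref{eq:prod_exp} follows.

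I expect the only delicate point — the main, if mild, obstacle — to be the branch ambiguity of the complex logarithm: one must keep every factor inside a fixed neighbourhood of $1$ on which a single-valued $\operatorname{Log}$ exists and genuinely inverts $\exp$, and it is precisely the uniform smallness in~\eqref{eq:assmp1} (not merely $u_i/n\to0$ pointwise, nor the averaged condition~\eqref{eq:assmp2}) that secures this. I would also flag that the remainder sum above tends to $0$ only because the cancellation carried by~\eqref{eq:assmp2} has been separated off first: the crude bound $\sum_i|u_i/n|\le\varepsilon_n\sqrt n$ need not vanish. A logarithm-free variant one could use instead is to write $\prod_i(1+u_i/n)=e^{\frac1n\sum_i u_i}\prod_i(1+u_i/n)e^{-u_i/n}$, where the prefactor tends to $e^S$ and, since $(1+z)e^{-z}=1+O(|z|^2)$ for $|z|\le1$, the remaining product obeys $|\prod_i(1+u_i/n)e^{-u_i/n}-1|\le \exp(C\sum_i|u_i/n|^2)-1\le\exp(C\varepsilon_n^2)-1\to0$.
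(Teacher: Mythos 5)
Your proof is correct and follows essentially the same route as the paper's: both isolate the factor $e^{S_n}$ with $S_n=\frac1n\sum_i u_i$ and show the residual correction, controlled by $|\log(1+z)-z|=O(|z|^2)$ and hence of size $O(M_n^2/n)=O(\varepsilon_n^2)$, tends to zero (the paper packages this via the analytic function $L(z)=(\log(1+z)-z)/z^2$ and the bound $|e^z-1|\le|z|e^{|z|}$ rather than via an explicit principal logarithm of each factor). Your extra care with the branch of $\operatorname{Log}$ and your logarithm-free variant are sound but do not change the substance of the argument.
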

If we accept the claim, we can prove the theorem as follows. For any fixed $t\in\R$:
\barr
\prod_{k=1}^n\cos\left(\frac{t\lambda_{k,n}}{2\sqrt{n}}\right)
&=&\prod_{k=1}^n\left(1-\frac{1}{2}\frac{t^2\lambda_{k,n}^2}{4n}(1+o(1))\right).
\earr
By the claim and hypotheses i) and ii) (using $\max_k\lambda_{k,n}^2=\displaystyle\|X_n^TX_n\|_{\mathrm{op}}$) the last expression converges to $\exp(-\sigma^2t^2/2)$ 
and by L\'evy's continuity theorem this proves~\eqref{eq:them_conv}.

It remains to prove the claim. We adapt a proof given in~\cite[Lemma A.5]{Maltsev16}. Set
\be
P_n=\prod_{i=1}^{n}\left(1+\frac{u_i}{n}\right), \quad S_n=\frac{1}{n}\sum_{i=1}^{n}u_i,\quad M_n=\max_{1\leq i\leq n}|u_i|.
\ee
Note that the function $\log(1+z)+z$ has a double zero at $z=0$. Hence 
\be
L(z)= (\log(1+z)+z)/z^2
\ee 
is analytic in the open disk $|z|<1$ (in particular it is continuous and bounded). The finite product can be written as
\be
P_n=e^{S_n}\exp\left\{\sum_{i=1}^n\left(\frac{u_i}{n}\right)^2L\left(\frac{u_i}{n}\right)\right\}.
\ee
Therefore
\be
P_n-e^{S_n}=e^{S_n}\left(\exp\left\{\sum_{i=1}^n\left(\frac{u_i}{n}\right)^2L\left(\frac{u_i}{n}\right)\right\}-1\right).
\ee
By continuity there exists $0<R<1$ such that $L(R)=1$. From~\eqref{eq:assmp1} it follows that, for $n$ sufficiently large, $M_n/n\leq R$ and by the maximum principle $|L(\frac{u_i}{n})|\leq1$. We conclude that, for large $n$,
\be
\left|P_n-e^{S_n}\right|\leq |e^{S_n}|\frac{M_n^2}{n}e^{\frac{M_n^2}{n}}\label{eq:ineq}
\ee
since  for any $z$, $|e^z-1|\leq|z|e^{|z|}$. By~\eqref{eq:assmp1} and \eqref{eq:assmp2} the above inequality implies the claim~\eqref{eq:prod_exp}.
\end{proof}
The following result provides a uniform bound on the discrepancy between the  finite-$N$ empirical density of energy levels and the  limiting Gaussian (in the sense of Kolmogorov distance between probability distributions).
\begin{prop}\label{prop:BE}
Denote
\be
s_n^2=\frac{1}{4n}\Tr (X_n^{T}X_n), \quad\rho_n^3=\frac{1}{8n^{3/2}}\Tr((X_n^{T}X_n)^{3/2}).
\ee
Then for all $n$ the following bound on the distance between the counting measure of the normalised energy levels $E_{k,n}/s_n$ and the  standard Gaussian distribution holds
\be
\sup_E
\left|\frac{1}{2^n}\#\left\{k\colon \frac{E_{k,n}-K_n}{s_n}<E\right\}-\frac{1}{\sqrt{2\pi}}\int_{-\infty}^E e^{-\frac{x^2}{2}}\de x\right|
\leq \frac{C}{\sqrt{n}}\frac{\rho_n^3}{s_n^{3/2}}, \label{eq:thm_bound}
\ee
for an absolute constant $C$ that may be chosen as $C=6$.
\end{prop}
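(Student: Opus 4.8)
The plan is to turn the exact identity~\eqref{eq:emp2} into a probabilistic representation of the counting measure and then to apply the quantitative central limit theorem (Berry--Esseen) for sums of independent, not necessarily identically distributed, summands. Equip $\{1,\dots,2^n\}$ with the uniform probability; by~\eqref{eq:emp2} the random variable $k\mapsto E_{k,n}-K_n$ then has exactly the law of $W_n=\sum_{k=1}^n r_k\lambda_{k,n}$, where $\lambda_{1,n},\dots,\lambda_{n,n}\ge 0$ are the singular values of $X_n=P_n(A+B)P_n$ and the $r_k$ are $\iid$ with $\Pr(r_k=\tfrac12)=\Pr(r_k=-\tfrac12)=\tfrac12$. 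The summands $Y_k:=r_k\lambda_{k,n}$ are independent and centred, with $\E[Y_k^2]=\tfrac14\lambda_{k,n}^2$ and $\E[|Y_k|^3]=\tfrac18\lambda_{k,n}^3$. Using $\sum_k\lambda_{k,n}^2=\Tr(X_n^{T}X_n)$ and $\sum_k\lambda_{k,n}^3=\Tr\big((X_n^{T}X_n)^{3/2}\big)$, summation over $k$ gives
\[
B_n^2:=\V[W_n]=\tfrac14\Tr(X_n^{T}X_n)=n\,s_n^2,\qquad
\sum_{k=1}^n\E[|Y_k|^3]=\tfrac18\Tr\big((X_n^{T}X_n)^{3/2}\big)=n^{3/2}\rho_n^3 .
\]

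The second step is to invoke the Berry--Esseen inequality in the form valid for independent non-identically-distributed summands (see~\cite{Feller71}): for independent centred $Y_1,\dots,Y_n$ with $B_n^2=\sum_k\V[Y_k]>0$,
\[
\sup_{x\in\R}\left|\Pr\!\left(\frac{1}{B_n}\sum_{k=1}^{n}Y_k<x\right)-\frac{1}{\sqrt{2\pi}}\int_{-\infty}^{x}e^{-u^2/2}\,\de u\right|
\;\le\;\frac{C_0}{B_n^{3}}\sum_{k=1}^{n}\E[|Y_k|^3] ,
\]
where $C_0$ is an absolute constant (one may take $C_0=6$). Inserting the two moment sums from the first step, the right-hand side equals $C_0\,n^{3/2}\rho_n^3/(n^{1/2}s_n)^3=C_0\,\rho_n^3/s_n^3$; since $W_n$ has, under the counting measure, the same law as $E_{k,n}-K_n$ and $B_n=\sqrt{n}\,s_n$ is its standard deviation, this gives the Kolmogorov-distance bound~\eqref{eq:thm_bound} with $C=6$. (Alternatively, one can reprove exactly the needed inequality directly from the characteristic-function representation~\eqref{eq:comp_fourier} via Esseen's smoothing lemma, which keeps the argument self-contained and is likely how the explicit constant is pinned down.)

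Structurally the argument is immediate once the representation is in place — its whole content is~\eqref{eq:emp2} together with a quantitative CLT, and in particular none of the asymptotic hypotheses of Theorem~\ref{thm:DEL} are used. The main things to get right — and the most likely sources of error — are, first, to quote the \emph{non-i.i.d.}\ Berry--Esseen bound with an explicit absolute constant (the $\lambda_{k,n}$ are generally distinct, so the i.i.d.\ version does not apply, and the Feller--Lindeberg argument behind Theorem~\ref{thm:DEL} is only qualitative and yields no rate); and, second, to keep track of the normalisation, since the standard deviation of $E_{k,n}-K_n$ under the counting measure is $B_n=\sqrt{n}\,s_n$, with $s_n^2$ the second moment of the $\sqrt n$-rescaled density in the sense of~\eqref{eq:finitevar}, so that the Lyapunov-type ratio controlling the discrepancy is precisely $\sum_k\E[|Y_k|^3]/B_n^3=\rho_n^3/s_n^3$. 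Finally, the inequality is non-vacuous only when $X_n\ne 0$, i.e.\ $s_n>0$; when $X_n=0$ the spectrum reduces to the single value $K_n$ and both sides of~\eqref{eq:thm_bound} vanish.
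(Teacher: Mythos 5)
Your argument is exactly the paper's: represent the counting measure as the law of $\sum_{k}r_k\lambda_{k,n}$ with $\iid$ signs $r_k$, compute $\E[x_k^2]=\lambda_{k,n}^2/4$ and $\E|x_k|^3=\lambda_{k,n}^3/8$, and invoke Feller's Berry--Esseen theorem for independent non-identically distributed summands with absolute constant $6$, so the proposal is correct and takes essentially the same route. The only point worth noting is bookkeeping: your Lyapunov ratio $\rho_n^3/s_n^3$ is the dimensionally correct quantity given the definitions of $s_n^2$ and $\rho_n^3$, and matching it to the right-hand side of \eqref{eq:thm_bound} exactly as printed (with the prefactor $n^{-1/2}$ and the exponent $3/2$ on $s_n$) involves the same normalisation conventions the paper itself uses in its proof.
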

\begin{proof}
To prove~\eqref{eq:thm_bound} we use a classical Berry-Esseen inequality for independent nonidentically distributed variables.
The empirical distribution of the shifted energy levels is the same as the empirical distribution of the sum of independent centred random variables $x_1,\dots,x_n$ with the position
\be
x_k=r_k\lambda_{k,n},
\ee
where $r_1,r_2,\dots$ are i.i.d. binary variables (note that the $x_k$'s are not identically distributed). Denote by $F_n$ the cumulative distribution of the normalised sum $(x_1+\cdots +x_n)/(\sum_{k=1}^n\E[x_k^2])^{1/2}$. Then for all $x$ and $n$
\be
\left|F_n(x)-\frac{1}{\sqrt{2\pi}}\int_{-\infty}^x e^{-y^2/2}\de y\right|
\leq \frac{C}{\sqrt{n}}\frac{\sum_{k=1}^n\E|x_k|^3}{(\sum_{k=1}^n\E x_k^2)^{3/2}},
\ee
where $C\leq 6$ (see Ch.~XVI.5, Theorem 2 in~\cite{Feller71}). An elementary computation shows that
\be
\E x_k^2=\frac{\lambda_{k,n}^2}{4}\quad\text{and}\quad\E|x_k|^3=\frac{\lambda_{k,n}^3}{8},
\ee
where the expectation value is taken with respect to $r_1,\dots,r_n$. This concludes the proof, since the $\lambda_{k,n}$'s are (up to a rescaling) the singular values of $X_n$.
\end{proof}
\begin{example}\label{ex:1} We show that the rate $n^{-1/2}$ in~\eqref{eq:thm_bound} is optimal. Suppose that $A_{ij}=\xi \delta_{ij}$ ($\xi\in\R$) and $B_{ij}=0$. Hence, the quadratic form reads
\be
\HH_n=\xi\sum_{k=1}^n\eta_k^{\d}\eta_k.\label{eq:ham_ex1}
\ee
In this case the elementary excitations $\lambda_{k,n}$ are all equal to $\xi$ and empirical distribution of the energy level $E_{k,n}$ is given by the distribution of the sum of i.i.d. variables $x_k=\xi r_k$ with $r_k$ as above. Therefore we have
\be
\frac{1}{2^n}\sum_{k=1}^{2^n}\delta\left(E-\frac{E_{k,n}-\xi n/2}{\sqrt{n}}\right)=\E\delta\left(E-\frac{1}{\sqrt{n}}\sum_{k=1}^nx_k\right).
\ee
By the central limit theorem for i.i.d. random variables, $\frac{1}{\sqrt{n}}\sum_{k=1}^nx_k$ converges to a Gaussian variable with mean $0$ and variance $\xi^2/4$ (compare with Theorem \ref{thm:DEL}).  Moreover, by Chebychev inequality $\sum_{k=1}^nx_k\in(-\xi\sqrt{n},\xi\sqrt{n})$ with probability at least $3/4$ and therefore each value in this interval is taken with probability proportional to $\frac{1}{\xi\sqrt{n}}$. Hence, the distribution of the discrete random variable $\sum_{k=1}^nx_k$ has jumps of size $n^{-1/2}$. On the other hand the Gaussian distribution is continuous. So the error in the Gaussian approximation is at least given by the size of the jumps which matches with the bound in \eqref{eq:thm_bound}.
\end{example}
Theorem \ref{thm:DEL} can be adapted to deal with random quadratic Fermi Hamiltonians (see the examples (ii), (iii) and (iv) presented in the introduction).
Let $(\Omega,\mathcal{F},\mathbb{P})$ be a probability space. The expectation with respect to $\mathbb{P}$ will be denoted by $\mathbb{E}$. Let us suppose that $A(\omega)$ and $B(\omega)$  ($\omega\in\Omega$) are random double arrays of real numbers satisfying $A(\omega)_{ij}=A(\omega)_{ji}$ and $B(\omega)_{ij}=-B(\omega)_{ji}$. Hence \eqref{eq:quadratic} defines a sequence of random quadratic forms  $\HH_n(\omega)$ in Fermi operators. Our approach to proving convergence to a Gaussian limit consists of two steps: firstly, we average over fictitious binary variables (using Theorem \ref{thm:DEL}) for a given realization of the disorder ($A(\omega)_{ij}$ and $B(\omega)_{ij}$); then, if the first average in the limit of large $n$ is independent of the realization $\omega$, we can average over the disorder (i.e. with respect to $\mathbb{P}$). Note that all random variables are defined on the same probability space.
We have the following result as a corollary of  Theorem \ref{thm:DEL}.
\begin{cor}\label{thm:DELrandom} Let $\HH_n(\omega)$ be the random quadratic form~\eqref{eq:quadratic} defined by $A(\omega)$ and $B(\omega)$. Let $X_n(\omega)=\displaystyle P_n(A(\omega)+B(\omega))P_n$ and assume that the following conditions hold true $\mathbb{P}$-almost surely:
\begin{itemize}
\item[i)] $\displaystyle\lim_{n\to\infty}n^{-1/4}\displaystyle\|X_n(\omega)\|_{\mathrm{op}}=0$;
\item[ii)] $\displaystyle\lim_{n\to\infty}\frac{1}{4n}\Tr (X_n(\omega)^{T}X_n(\omega))=
\sigma^2\in\R$.
\end{itemize}
Then, the sequence of density of rescaled energy levels
\be
\nu_n(E;\omega)=\frac{1}{2^n}\sum_{k=1}^{2^n}\delta\left(\frac{E_{k,n}(\omega)-K_n(\omega)}{\sqrt{n}}-E\right)\label{eq:mean_convergence}
\ee
weakly converges in average, as $n\to\infty$, to a centred Gaussian probability measure with variance $\sigma^2$.
(This means that
\be
\mathbb{E}\int f(E)\de\nu_n(E;\omega)\to\frac{1}{\sqrt{2\pi\sigma^2}}\int f(E)e^{-\frac{E^2}{2\sigma^2}}\de E,
\ee
as $n\to\infty$, for all $f$ bounded and continuous.)
\end{cor}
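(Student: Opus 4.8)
The strategy is to bootstrap Theorem~\ref{thm:DEL} with the dominated convergence theorem, using the fact that the averaged objects are ordinary probability measures. Let $\Omega_0\in\mathcal{F}$ be the event on which both hypotheses i) and ii) hold; by assumption $\mathbb{P}(\Omega_0)=1$. Fix $\omega\in\Omega_0$: the deterministic sequence of quadratic forms $\HH_n(\omega)$ then satisfies verbatim the hypotheses of Theorem~\ref{thm:DEL}, with the \emph{same} variance $\sigma^2$, since by ii) this limit is a fixed real number not depending on $\omega$. Hence Theorem~\ref{thm:DEL} gives, for every bounded continuous $f$,
\be
\phi_n(\omega):=\int f(E)\,\de\nu_n(E;\omega)\longrightarrow\frac{1}{\sqrt{2\pi\sigma^2}}\int f(E)\,e^{-E^2/(2\sigma^2)}\,\de E=:\phi_\infty\qquad(n\to\infty),
\ee
for every $\omega\in\Omega_0$, i.e.\ $\mathbb{P}$-almost surely.

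Next I would check that $\phi_n$ is $\mathcal{F}$-measurable, so that $\mathbb{E}\,\phi_n$ makes sense. This is routine: by~\eqref{eq:mean_convergence}, $\phi_n(\omega)=2^{-n}\sum_{k=1}^{2^n}f\big((E_{k,n}(\omega)-K_n(\omega))/\sqrt{n}\big)$ is a finite sum of compositions of the continuous map $f$ with the maps $\omega\mapsto E_{k,n}(\omega)$ and $\omega\mapsto K_n(\omega)=\tfrac12\Tr P_nA(\omega)P_n$; the former are measurable because the ordered singular values $\lambda_{k,n}(\omega)$ of $X_n(\omega)$ — and hence the subset sums~\eqref{eq:sp(H)} producing the $E_{k,n}(\omega)$ — depend continuously, hence measurably, on the entries $A(\omega)_{ij},B(\omega)_{ij}$.

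Finally, since $\nu_n(\cdot\,;\omega)$ is a probability measure for every $\omega$, the uniform bound $|\phi_n(\omega)|\le\|f\|_\infty$ holds for all $n$ and $\omega$, and the constant $\|f\|_\infty$ is trivially $\mathbb{P}$-integrable. The dominated convergence theorem then yields
\be
\mathbb{E}\int f(E)\,\de\nu_n(E;\omega)=\mathbb{E}\,\phi_n\longrightarrow\mathbb{E}\,\phi_\infty=\phi_\infty=\frac{1}{\sqrt{2\pi\sigma^2}}\int f(E)\,e^{-E^2/(2\sigma^2)}\,\de E,
\ee
which is exactly the asserted weak convergence in average (by Fubini's theorem $\mathbb{E}\int f\,\de\nu_n(\cdot\,;\omega)=\int f\,\de(\mathbb{E}\,\nu_n)$, so this is genuine weak convergence of the averaged probability measures $\mathbb{E}\,\nu_n$). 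An equivalent route is to run the argument on characteristic functions: $\mathbb{E}\int e^{itE}\de\nu_n(E;\omega)=\mathbb{E}\prod_{k=1}^n\cos\!\big(t\lambda_{k,n}(\omega)/(2\sqrt{n})\big)$ is bounded by $1$ and tends $\mathbb{P}$-a.s.\ to $e^{-\sigma^2t^2/2}$ by the proof of Theorem~\ref{thm:DEL}, whence dominated convergence plus L\'evy's continuity theorem conclude.

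There is no serious analytic obstacle once Theorem~\ref{thm:DEL} is in hand; the two points that genuinely need (minor) care are the measurability of $\omega\mapsto\phi_n(\omega)$ and the observation that hypothesis ii) demands a \emph{deterministic} limit $\sigma^2$. It is precisely this last point that makes the corollary clean: it forces the $\mathbb{P}$-a.s.\ limit $\phi_\infty$ to be non-random, so that interchanging limit and expectation returns a single Gaussian rather than a mixture. (If $\sigma^2(\omega)$ were allowed to be truly random, $\mathbb{E}\,\nu_n$ would instead converge to the law of $\sigma(\omega)\,Z$ with $Z$ a standard Gaussian independent of the disorder, and the analogous a.s.\ weak convergence of $\nu_n(\cdot\,;\omega)$ would still hold.)
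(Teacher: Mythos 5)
Your proposal is correct and follows essentially the same route as the paper: restrict to the full-measure event where both hypotheses hold, apply Theorem~\ref{thm:DEL} pointwise in $\omega$, and pass from almost-sure convergence to convergence in mean by boundedness and dominated convergence (the paper does this with characteristic functions and L\'evy's continuity theorem, which you note as your equivalent second route). Your explicit measurability check and the remark on why $\sigma^2$ must be deterministic are welcome additions but do not change the argument.
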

\begin{proof} The proof is based on the representation of the shifted energy levels in terms of the set of fictitious independent binary variables $r_k$
\be
E_{k,n}(\omega)-K_n(\omega)=\sum_{k=1}^nr_k\lambda_{k,n}(\omega),
\ee
where $\lambda_{k,n}(\omega)$ are the singular values of $X_n(\omega)$.

Let us introduce the sets
\barr
S_{1}&=&\left\{\omega\colon\displaystyle\lim_{n\to\infty}n^{-1/4}\displaystyle\|X_n(\omega)\|_{\mathrm{op}}=0\right\},\\
S_{2}&=&\left\{\omega\colon\text{$\displaystyle\lim_{n\to\infty}\frac{1}{4n}\Tr (X_n(\omega)^{T}X_n(\omega))=\sigma^2$}\right\},\\
S&=&S_1\cap S_2.
\earr
By hypothesis $\mathbb{P}(S_i)=1$ for $i=1,2$, and therefore $\mathbb{P}(S)=1$. Hence, if $\omega\in S$, by Theorem \ref{thm:DEL}
\be
\int e^{itE}\de\nu_n(E;\omega)\to e^{-\frac{\sigma^2t^2}{2}}.
\ee
The above convergence holds $\mathbb{P}$-almost surely (for all $\omega\in S$). Moreover the function $x\mapsto \exp(ix)$ is absolutely bounded and therefore the almost sure convergence can be promoted to convergence in mean
\be
\mathbb{E}\int e^{itE}\de\nu_n(E;\omega)\to e^{-\frac{\sigma^2t^2}{2}}.
\ee
The proof is completed by using L\'evy's continuity theorem.
\end{proof}

Classes of random matrix ensembles which include quantum spin glasses (random two-spin interaction) on generic graphs have been recently considered in \cite{Keating14,Wells14,Keating15,Erdos14}. For these Hamiltonians, using the algebraic identities for Pauli matrices, it has been proved that the limiting spectral density, as the graph cardinality increases, is Gaussian. 
For spin $1/2$'s with nearest neighbourhood random interaction, by the Jordan-Wigner transformation, those systems are equivalent to random quadratic forms of Fermi operators and our method provides an alternative proof of these results. For generic Hamiltonians $\HH_n$, the inverse Jordan-Wigner transformation maps the problem to spin $1/2$ systems with more complicated interactions not considered in previous works. Our method of proof relies on the subset sum structure in the spectrum of quadratic Fermi operators and it is sufficiently robust to be extended to a large class of random Hamiltonians. An exceptional example of random Hamiltonian that does not exhibit a Gaussian limit is presented below.

\begin{example}\label{ex:2} Let us consider the Hamiltonian \eqref{eq:ham_ex1} of Example \ref{ex:1}, but suppose now that $\xi=\xi(\omega)$ is a bounded centred random variables with $0<\mathrm{Var}(\xi^2)<\infty$. Of course $\mu=0$, but $\frac{1}{4n}\Tr (X_n(\omega)^{T}X_n(\omega))=\xi^2(\omega)$ is a random variable.
The rescaled density of states $\nu_n(E;\omega)$ converges $\mathbb{P}$-almost surely to a centred Gaussian density with (random) variance $\xi^2(\omega)$; nevertheless we have no convergence in mean.
\end{example}

In the following sections we discuss explicit examples in detail. In Section \ref{sec:spin} we illustrate the method on spin $1/2$'s systems with fixed nonrandom couplings. We consider in detail the XY model with free boundary conditions and the Ising model with transverse field studied in~\cite{Atas14}.  Then, we present our results for the quantum percolation models and the Anderson models (Section \ref{sec:perc}). In Section \ref{sec:Ginibre} we establish the connection between non-sparse Gaussian quadratic operators and the Ginibre ensemble of random matrix theory. Finally, in Section \ref{sec:band} we apply our theorem to other random band models. The level spacing distribution is discussed in Section~\ref{sec:spacings}.
\section{Spin chains}\label{sec:spin}
As described in the introduction, chains of interacting spin $1/2$'s can be mapped to systems of spinless fermions. We shall apply our theorems to those systems.

The paradigmatic example is provided by the $XY$ chain, a canonical
toy model for quantum spin systems routinely used as a first example to illustrate new concepts. Assuming free boundary conditions, the Hamiltonian of the $XY$-model for $n$ spins can be written as \eqref{eq:HXYfermions}. In this model $A$ and $B$ have a tridiagonal form 
\be
P_nAP_n=\left(
\begin{matrix}
0&1&& & & &0\\
1&0&1&& &\\
&\cdot&\cdot&\cdot&\\
&&\cdot&\cdot&\cdot\\
&&&\cdot&\cdot&\cdot\\
&&&&1&0&1\\
0&&&&&1&0
\end{matrix}
\right),\,
P_nBP_n=\left(
\begin{matrix}
0&\gamma&& & & &0\\
-\gamma&0&\gamma&& &\\
&\cdot&\cdot&\cdot&\\
&&\cdot&\cdot&\cdot\\
&&&\cdot&\cdot&\cdot\\
&&&&-\gamma&0&\gamma\\
0&&&&&-\gamma&0
\end{matrix}
\right);\nonumber
\ee
Note that $A_{ii}=0$ (hence $K_n=0$).
The elementary excitations are  \cite{Lieb61}
\be
\lambda_{k,n}=2\sqrt{1-(1-\gamma^2)\sin^2\theta_{k,n}}, \label{eq:lambda_Lieb}
\ee
where the $\theta_{k,n}$'s are solution of a transcendental equation \cite[Eq. (2.64e)]{Lieb61}.
It is clear that $|\lambda_{k,n}|\leq2$ and
\be
\sigma^2=\lim_{n\to\infty}\frac{1}{4n} \sum_{i,j=1}^n(A_{ij}^2+B_{ij}^2)=\frac{1}{2}(1+\gamma^2),
\ee
and therefore the density of energy levels of the XY model converges to
\be
\frac{1}{2^n}\sum_{k=1}^{2^n}\delta\left(E-E_{k,n}/\sqrt{n}\right)\rightharpoonup\frac{1}{\sqrt{\pi(1+\gamma^2)}}e^{-\frac{E^2}{(1+\gamma^2)}}\de E.\label{eq:DEL_XY}
\ee


Similar considerations can be extended in presence of external fields.  For simplicity we consider the Ising model in transverse field $-\sum_{j=1}^n\sigma_{j}^x\sigma_{j+1}^x+h\sigma_{j}^z$,
where $h\geq0$ is the external magnetic field.  The problem can be reduced to a quadratic form in Fermi operators whose normal modes decomposition has
\be
\lambda_{k,n}=2\sqrt{1-2h\cos\theta_{k,n}+h^2},
\ee
with phases $\theta_{k,n}=\frac{2\pi (k-1)}{n}-\pi,$ ($k=1,\dots,n$) equidistributed. Now $|\lambda_{k,n}|\leq(2+h)$ and 
\be
\lim_{n\to\infty}\frac{1}{4n}\sum_{k=1}^n\lambda_{k,n}^2=(1+h^2).
\ee 
We conclude that
\be
\frac{1}{2^n}\sum_{k=1}^{2^n}\delta\left(E-E_{k,n}/\sqrt{n}\right)\rightharpoonup\frac{1}{\sqrt{2\pi(1+h^2)}}e^{-\frac{E^2}{2(1+h^2)}}\de E,\label{eq:DEL_Ising}
\ee
according to~\cite[Eq. (20)]{Atas14}. Of course, at zero magnetic field $h=0$ we recover the limit density~\eqref{eq:DEL_XY} of the XY model in the Ising limit $\gamma\to1$.

\section{Quantum bond percolation and Anderson models}\label{sec:perc}
A quantum bond percolation model \eqref{eq:percolation} can be cast in the form
\be
\HH_n^{perc}=\sum_{i,j=1}^nA_{ij}(\omega)c_i^{\d}c_j,\label{eq:H_perc_And}
\ee
where $i,j\in V$ denotes the vertices (sites) of a graph $\Gamma=(V,E)$ and $A_{ij}(\omega)=t_{ij}(\omega)1_{(i,j)\in E}$ is the adjacency matrix of $\Gamma$ weighted by random independent Bernoulli variables $\mathbb{P}(t_{ij}=1)=1-\mathbb{P}(t_{ij}=0)=p\in(0,1)$ on a probability space $(\Omega,\mathcal{F},\mathbb{P})$. We assume that the graph $\Gamma$ is a connected regular lattice; in particular, $\Gamma$ does not contain loops (therefore $A_{ii}=0$) and the degree of the vertices is constant $d(i)=d$, where $d(i)$ is the number of neighbours of $i\in V$. ($d$ is called coordination number of the lattice.)

It is well known that the largest eigenvalue of the adjacency matrix of a graph is bounded by the maximal degree. This implies that $\|\sqrt{X_n^TX_n}\|_{\mathrm{op}}\leq d$.  We then compute
\barr
\lim_{n\to\infty}\frac{1}{4n}\Tr (X_n^TX_n)
=\lim_{n\to\infty}\frac{1}{4n}\sum_{\substack{i,j=1\\(i,j)\in E}}^nt_{ij}^2(\omega)
=\lim_{n\to\infty}\frac{1}{4n}\sum_{i=1}^n\sum_{j\in d(i)}^nt_{ij}^2(\omega)=\frac{d p}{4},
\earr
for $\mathbb{P}$-almost all $\omega$. Therefore, by Corollary \ref{thm:DELrandom} we have
\be
\frac{1}{2^n}\sum_{k=1}^{2^n}\delta\left(E-E_{k,n}/\sqrt{n}\right)\rightharpoonup\sqrt{\frac{2}{\pi dp}}e^{-\frac{2E^2}{dp}}\de E.\label{eq:DEL_percolation}
\ee
A very similar analysis can be performed for the Anderson model $\HH_n^{And}$ on a regular lattice defined in Eq. \eqref{eq:Anderson}. The coefficients are $A_{ij}(\omega)=\delta_{ij}v_i(\omega)+t1_{(i,j)\in E}$ and $B_{ij}=0$. The $v_i$'s are i.i.d. variables with mean zero and variance $W^2$. By the strong law of large numbers we find
\barr
\lim_{n\to\infty}\frac{1}{4n}\sum_{i,j=1}^nA_{ij}^2(\omega)=\lim_{n\to\infty}\frac{1}{4n}\left(\sum_{i}^nv_{i}^2(\omega)+\sum_{\substack{i,j=1\\(i,j)\in E}}^nt^2\right)
=\frac{1}{4}\left(W^2+dt^2\right),
\earr
for $\mathbb{P}$-almost all $\omega$.

\section{Gaussian quadratic forms and the Ginibre ensemble} \label{sec:Ginibre}
Let us consider the Hamiltonian~\eqref{eq:quadratic} with random coefficients $A_{ij}(\omega),B_{ij}(\omega)$, $\omega\in\Omega$. We consider the case of $A_{ij}(\omega)=a_{ij}(\omega)/\sqrt{n}$, $B_{ij}(\omega)=b_{ij}(\omega)/\sqrt{n}$ independent Gaussian variables, modulo the symmetries $a_{ij}=a_{ji}$ and $b_{ij}=-b_{ji}$ with mean and variance
\be
\mathbb{E}[a_{ij}]=\mathbb{E}[b_{ij}]=0,\quad
\mathbb{E}[a_{ij}^2]=(1+\delta_{ij})s^2,\quad
\mathbb{E}[b_{ij}^2]=(1-\delta_{ij})s^2.\label{eq:Gauss3}
\ee
For Gaussian random variables the problem is simplified thanks to the following observation: 
if $Z_1,Z_2$ are independent and identically distributed normal variables, then $(Z_1+Z_2)$ and $(Z_1-Z_2)$ are independent normal variables. 
Therefore the entries of the $n\times n$ matrix  $X_n(\omega)=P_n(A(\omega)+B(\omega))P_n$ are $\iid$ Gaussian variables; hence
\be
X_{n}\stackrel{\mathrm{d}}{=}\sqrt{\frac{2s^2}{n}}\mathcal{G},
\ee
where $\mathcal{G}_{ij}$ are $\iid$ standard real Gaussian variable  ($\mathcal{G}$ is a random matrix belonging to the real Ginibre ensemble~\cite{Ginibre65}). We have therefore established that \emph{the elementary excitations $\lambda_{k,n}$ ($k=1,\dots,n$) of a quadratic form with $\iid$ Gaussian coefficients are distributed as the singular values of a real Ginibre matrix $\mathcal{G}(\omega)$ of size $n$ (equivalently, $\lambda_{k,n}^2$ are the eigenvalues of a real $n\times n$ Wishart matrix $\mathcal{W}(\omega)=\mathcal{G}^{T}(\omega)\mathcal{G}(\omega)$)}.

It is well-known that the singular values of $n\times n$ Ginibre matrice whose entries are $\mathcal{O}(1)$ are typically of order $\mathcal{O}(\sqrt{n})$. We therefore rescaled the coefficients $A_{ij},B_{ij}$ by $\sqrt{n}$ to get a sensible limit for the density of energy levels. In fact, using classical asymptotic results on the extreme singular values of random matrices with $\iid$ entries~\cite{Geman80,Bai88}, we know that with probability $1$ all the elementary excitations $\lambda_{k,n}(\omega)$ lie in a fixed interval for large $n$. More precisely we have
\be
\lim_{n\to\infty}\max_{k=1,\dots,n}\lambda_{k,n}(\omega)=\sqrt{2s^2},
\ee
for $\mathbb{P}$-almost all $\omega$.
By the strong law of large numbers we also have
\be
\lim_{n\to\infty}\frac{1}{4n}\Tr(X_n^{T}(\omega)X_n(\omega))=\lim_{n\to\infty}\frac{s^2}{2n^2}\sum_{i,j=1}^n\mathcal{G}_{ij}^2(\omega)=\frac{s^2}{2},
\ee
for $\mathbb{P}$-almost all $\omega$, and by Corollary~\ref{thm:DELrandom} we conclude that for $n\to\infty$
\be
\mathbb{E}\frac{1}{2^n}\sum_{k=1}^{2^n}\delta\left(E-\frac{E_{k,n}-K_{n}}{\sqrt{n}}\right)\rightharpoonup\frac{1}{\sqrt{\pi s^2}}e^{-\frac{E^2}{s^2}}\de E.\label{eq:DEL_Gauss}
\ee
In the rest of this section we use the relation with the Ginibre ensemble to obtain results on the gound state energy and energy gap. The steps of proof are elementary and they borrow the difficult technical statements from previously known results in random matrix theory.
Under the above assumptions of $A_{ij}$ and $B_{ij}$, we have that
\be
\sqrt{\frac{n}{2s^2}}(\lambda_{1,n},\lambda_{2,n},\dots,\lambda_{n,n})\stackrel{\mathrm{d}}{=}(x_1,x_2,\dots,x_n),\label{eq:repres}
\ee
where the joint probability density of the $n$ random variables $x_k$'s is
\begin{align}
&2^nC_n\prod_{i<j}|x_i^2-x_j^2|\prod_ke^{-x_k^2/2}\de x_k,\quad C_n^{-1}=\sqrt{\frac{2^{n^2}}{\pi^n}}\prod_{i=1}^n\Gamma\left(\frac{n-i+1}{2}\right)^2\label{eq:jointlambda}
\end{align}
The joint law~\eqref{eq:jointlambda} is the eigenvalue distribution of the orthogonal chiral ensemble of random matrices. It is usually denoted as chOE, see~\cite[Chapter 3.1]{Forrester10} and~\cite{Verbaarschot00}.
As $n\to \infty$, the empirical distribution of the rescaled variables $x_k/\sqrt{n}$ converges almost surely to the quarter law\cite{MP67,Yin86}
\be
\frac{1}{n}\sum_{k=1}^n\delta\left(x-\frac{x_k}{\sqrt{n}}\right)\to\frac{1}{\pi}\sqrt{4-x^2}1_{(0,2)}(x)\de x\quad\as. \label{eq:quarter}
\ee

From these results we derive now a few properties of the ground state of $\HH_n$. The ground state energy is the lowest level $E_{1,n}$ and we denote by $\Delta_n=E_{2,n}-E_{1,n}$ the ground state energy gap.

\begin{prop}[Ground state energy and ground state energy gap]\label{thm:groundstate} Let $A_{ij}(\omega)$ and $B_{ij}(\omega)$ independent standard Gaussian variables as above (see eq. \eqref{eq:Gauss3}). Then, as $n\to\infty$,
\begin{itemize}
\item[i)] the rescaled ground state energy converges
\be
\frac{3\pi}{(2n)^{3/2}}E_{1,n}\to-s\quad\as;\label{eq:ground_as}
\ee
\item[ii)] The rescaled energy gap $\sqrt{n/2s^2}\Delta_n$ converges in distribution to a random variable whose probability density function is
\be
f(x)=(1+x)e^{-\frac{x^2}{2}-x},\quad x\geq0.\label{eq:limgap}
\ee
\end{itemize}
\end{prop}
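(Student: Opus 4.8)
The plan is to read off both quantities directly from the subset-sum description \eqref{eq:sp(H)} of the spectrum and then feed in the relevant random-matrix asymptotics for the singular values $\lambda_{1,n}\le\cdots\le\lambda_{n,n}$ of $X_n=P_n(A(\omega)+B(\omega))P_n$, which under \eqref{eq:Gauss3} is, in law, a fixed multiple of a square real Ginibre matrix $\mathcal{G}$; equivalently, up to that multiple, the $\lambda_{k,n}$ are distributed as the points $x_k$ with joint law \eqref{eq:jointlambda}.

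For part i), the ground state in \eqref{eq:sp(H)} is the one with empty subset, so $E_{1,n}=K_n-\tfrac12\sum_{k=1}^n\lambda_{k,n}$. The term $K_n=\tfrac12\sum_{i=1}^n A_{ii}(\omega)$ is a centred Gaussian whose variance grows linearly in $n$, hence $K_n=O(\sqrt{n\log\log n})$ almost surely by the law of the iterated logarithm and $K_n/n^{3/2}\to0$; it is negligible at the scale $n^{3/2}$. The main term is the nuclear norm $\sum_{k=1}^n\lambda_{k,n}=\Tr\sqrt{X_n^{T}X_n}$. Using the representation \eqref{eq:repres} together with the quarter-circle law \eqref{eq:quarter}, the empirical measure of the rescaled elementary excitations converges almost surely; the remaining point is to integrate the \emph{unbounded} function $x\mapsto x$ against this weak limit, which is justified because the classical almost-sure control of the largest singular value of matrices with i.i.d.\ entries \cite{Geman80,Bai88} (already invoked in Section~\ref{sec:Ginibre}) confines all the $\lambda_{k,n}$, for $n$ large, to a common compact interval, so the test function is effectively bounded and uniform integrability holds. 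This yields $n^{-3/2}\sum_{k}\lambda_{k,n}\to c\int_0^2 x\,\tfrac{1}{\pi}\sqrt{4-x^2}\,\de x=\tfrac{8c}{3\pi}$ almost surely, with $c$ the scale relating $X_n$ to $\mathcal{G}$; plugging this into $E_{1,n}=K_n-\tfrac12\sum_k\lambda_{k,n}$ and simplifying the constants gives \eqref{eq:ground_as}.

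For part ii), the second-lowest level in \eqref{eq:sp(H)} is obtained from the ground state by transferring a single elementary excitation from the $-$ block to the $+$ block, and the cheapest such move uses the smallest one (the $\lambda_{k,n}$ being almost surely distinct and positive in the Gaussian ensemble); hence
\[
\Delta_n=E_{2,n}-E_{1,n}=\min_{1\le k\le n}\lambda_{k,n},
\]
the smallest singular value of $X_n$. By \eqref{eq:repres}, $\sqrt{n/(2s^2)}\,\Delta_n$ has the same law as the smallest point $x_{\min}$ of the ensemble \eqref{eq:jointlambda}, i.e.\ as $\sqrt{n}$ times the smallest singular value of a standard $n\times n$ real Ginibre matrix. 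Its limiting distribution is the hard-edge (smallest-eigenvalue) law of the square Laguerre orthogonal ensemble, computed by Edelman: the variable $\sqrt{n}\,\sigma_{\min}$ converges in distribution to the density $(1+x)e^{-x^{2}/2-x}$ on $[0,\infty)$, equivalently $n\sigma_{\min}^{2}$ has density $\tfrac{1+\sqrt{x}}{2\sqrt{x}}\,e^{-x/2-\sqrt{x}}$ (see, e.g., \cite{Forrester10}). This is precisely \eqref{eq:limgap}.

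The combinatorial identifications of $E_{1,n}$ and $E_{2,n}$ from \eqref{eq:sp(H)}, and the bookkeeping of normalising constants, are routine. The one genuine point requiring care is in part i): the quarter-circle convergence is only weak convergence of probability measures, so passing to the limit of $n^{-3/2}\sum_k\lambda_{k,n}$ must be backed up by the almost-sure control of the top elementary excitation. Part ii) is essentially a transcription of a known statement; the only subtlety is that one needs the hard edge of the \emph{orthogonal} ($\beta=1$) Laguerre ensemble in the square ($m=n$) case rather than the generic rectangular version.
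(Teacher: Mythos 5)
Your proposal follows essentially the same route as the paper: the empty-subset identification $E_{1,n}=K_n-\tfrac12\sum_k\lambda_{k,n}$ with $K_n$ negligible, the quarter-circle law plus almost-sure control of the largest singular value for the nuclear-norm term, and the identification $\Delta_n=\min_k\lambda_{k,n}$ followed by Edelman's hard-edge law for the smallest singular value of a square real Ginibre matrix. The extra care you take with uniform integrability in part i) and with the single-excitation argument in part ii) only makes explicit steps the paper leaves implicit.
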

By the same proof one shows the almost sure convergence of the rescaled largest energy level $n^{-3/2}E_{2^n,n}$. Therefore, the numerical range of $\HH_n$ is roughly $(-an^{3/2},an^{3/2})$ with $a=(2^{3/2}/3\pi)s$. Note that $\Delta_n=\Ord(n^{-1/2})$. Hence the system is gapless.
\begin{proof}[Proof of Proposition~\ref{thm:groundstate}]
The ground state energy is given by (see \eqref{eq:sp(H)})
\be
E_{1,n}=K_n-\frac{1}{2}\sum_{k=1}^n\lambda_{k,n}.
\ee
By the law of large numbers $n^{-3/2}K_n=n^{-2}\sum_{i=1}^na_{ii}$ converges to zero almost surely. Using \eqref{eq:repres} and the quarter law \eqref{eq:quarter} the following almost sure convergence holds
\be
-\frac{1}{2n}\sum_{k=1}^n\frac{\lambda_{k,n}}{\sqrt{n}}\to-s\int_0^2\frac{\de x}{\pi} x\sqrt{4-x^2}=-\frac{8}{3\pi}s.
\ee
This proves~\eqref{eq:ground_as}.
The ground state energy gap is given by the smallest elementary excitation
\be
\Delta_n=\min_{k=1,\dots,n}\lambda_{k,n}=\sqrt{2s^2/n}\min_{k=1,\dots,n}x_k,
\ee
where $x_1,\dots,x_n$ are distributed according to~\eqref{eq:jointlambda}. The large $n$ distribution of $(n^{-1}\min_{k}x_k^2)$ is given in \cite[Corollary 3.1]{Edelman98}. The claim~\eqref{eq:limgap} follows.
\end{proof}
\begin{rmk}
We expect that one could generalize this analysis to non-Gaussian variables whose first four moments match the Gaussian moments using \emph{Lindenberg exchange strategy.} Using this technique one can replace the Gaussian variables $a_{ij}$ and $b_{ij}$ one at a time by random variables from a desired distribution. This approach is widely used to prove versions of the four moment theorem \cite{tv-review}.
\end{rmk}
\section{Other random band quadratic forms} \label{sec:band}
In this section we show that Corollary \ref{thm:DELrandom} applies to the case when $A$ and $B$ are random band arrays. We introduce a parameter $W_n\geq1$ which  corresponds to the number of non-zero diagonals, i.e. $A_{ij}=B_{ij}=0$ if $|i-j|>W_n$. 

We normalize $A_{ij}(\omega)$, $B_{ij}(\omega)$ to ensure that condition (ii) of Corollary \ref{thm:DELrandom} is satisfied.
To compute the normalization of the matrix entries in terms of $W_n$ we want 
\begin{equation}
\label{eq:normband}
\sigma = \lim_{n\to\infty}\frac{1}{4n}\Tr(X_n^{T}(\omega)X_n(\omega)) =
\lim_{n\to\infty}\frac{1}{4n}\sum_{i, j=1}^n (A_{ij}(\omega)+ B_{ij}(\omega))^2
\end{equation}
to be finite and non-random.
 If there are $W_n$ non-zero diagonals, the matrix $X_n$ has on the order of $nW_n$ non-zero entries, in the sense that we can take $A_{ij}(\omega)=a_{ij}(\omega)/\sqrt{W_n}$, $B_{ij}(\omega)=b_{ij}(\omega)/\sqrt{W_n}$ with $a_{ij}$ and $b_{ij}$ $\iid$ standardized random variables to achieve the finite limit in \eqref{eq:normband}. Here we do not need $a_{ij}$ and $b_{ij}$ to be Gaussian. 
 
We now show that condition (i) of Corollary \ref{thm:DELrandom} is also satisfied and therefore the density of energy levels of random band quadratic forms converges to a Gaussian. Suppose that $W_n=o(n^{1/2})$ and that $a_{ij}$ (and $b_{ij}$) has exponential decay, in the sense that there exists $\delta >0$ such that $\mathbb{E} e^{\delta |a_{ij}|} < \infty$. Then, letting $X_n = P_n(A + B)P_n$, 
\be
\lim_{n\to\infty}n^{-1/4}\displaystyle \|X_n\|_{\mathrm{op}}=0\qquad \mathbb{P}\mathrm{-a.s.}\ .\label{eq:lem}
\ee
We proceed to a proof of \eqref{eq:lem} by showing that for all $L>0$
\be
  \sum_{n=1}^{\infty} \mathbb{P}(n^{-1/4}\displaystyle \|X_n\|_{\mathrm{op}}>L)<\infty, \label{eqBC}
\ee
that implies~\eqref{eq:lem} by the Borel-Cantelli lemma. Note that by triangle inequality $\|X_n\|_{\text{op}} \leq \|P_nAP_n\|_{\text{op}}+\|P_nBP_n\|_{\text{op}} $.
The argument will be identical for the two terms on the right hand side so we will focus on the first one. For a symmetric matrix, the operator norm is equal to the largest modulus of the eigenvalues and it is therefore dominated by any matrix norm. In particular: 
\[
\|P_nAP_n\|_{\text{op}}=\sup_{\|\psi\|_2=1}\langle\psi,A\psi\rangle \leq\sup_{\|\psi\|_1=1}\langle\psi,A\psi\rangle = \max_{1\leq i \leq n} \sum_{j=1}^n |A_{ij}|
\]
Let $Z_i = \sum_{j=1}^n |A_{ij}| = \sum_j |a_{ij}|/\sqrt{W_n}$. Then using that the $Z_i$'s are identically distributed, by the union bound we obtain
 \begin{equation}
 \mathbb{P}(\max_{1\leq i \leq n} Z_i > L) \leq n \mathbb{P}(Z_1 > L) = n \mathbb{P}(\sum_{j=1}^n |a_{1j}| > L \sqrt{W_n}).\end{equation}
  Since $|a_{1j}|$, $j=1,\dots,n$ are $\iid$ random variables, we can apply a Chernoff bound
to get
   \be
   \mathbb{P}(n^{-1/4}\displaystyle \|X_n\|_{\mathrm{op}}>L)\leq 2ne^{C'W_n} e^{-\delta L \sqrt {n^{1/2}W_n}},
   \ee
by which we conclude that~\eqref{eqBC} holds true. For more general sharp concentration inequalities on the operator norm of random matrices see, for instance,~\cite{Bandeira16}.

\section{Level clustering}\label{sec:spacings}
\begin{figure}[t]
\centering
\includegraphics[width=.49\columnwidth]{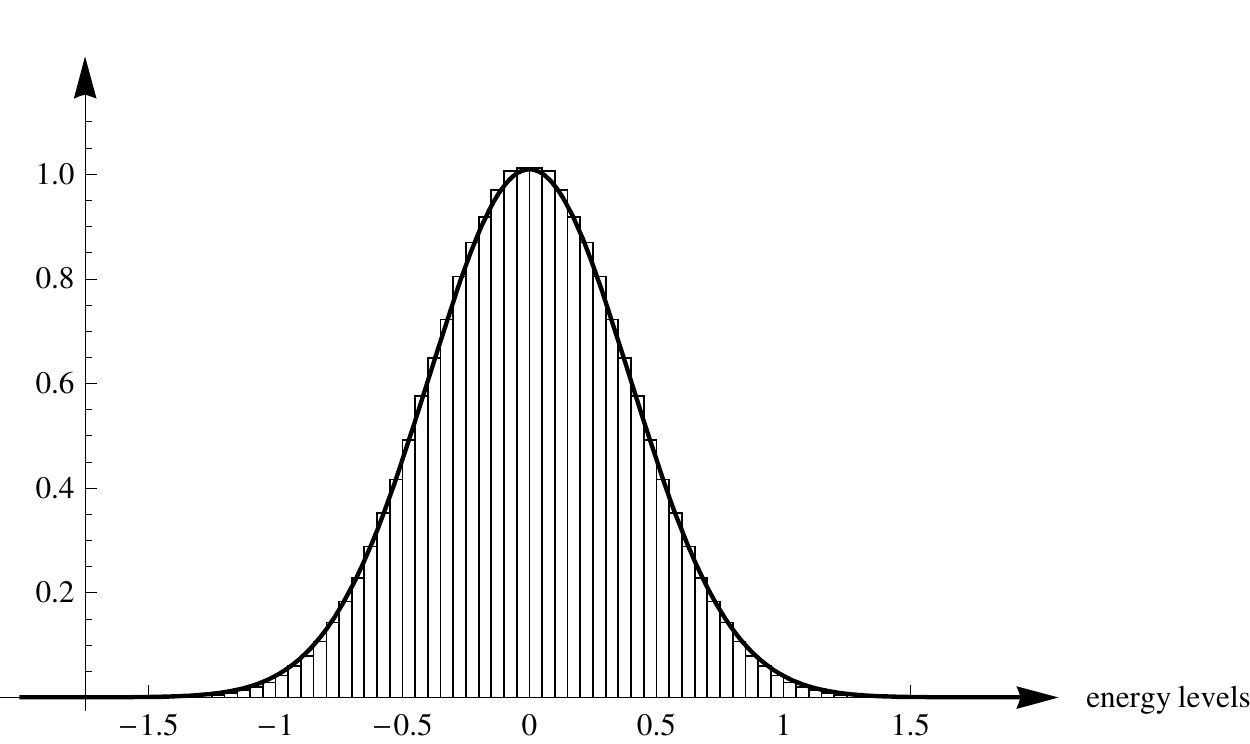}
\includegraphics[width=.49\columnwidth]{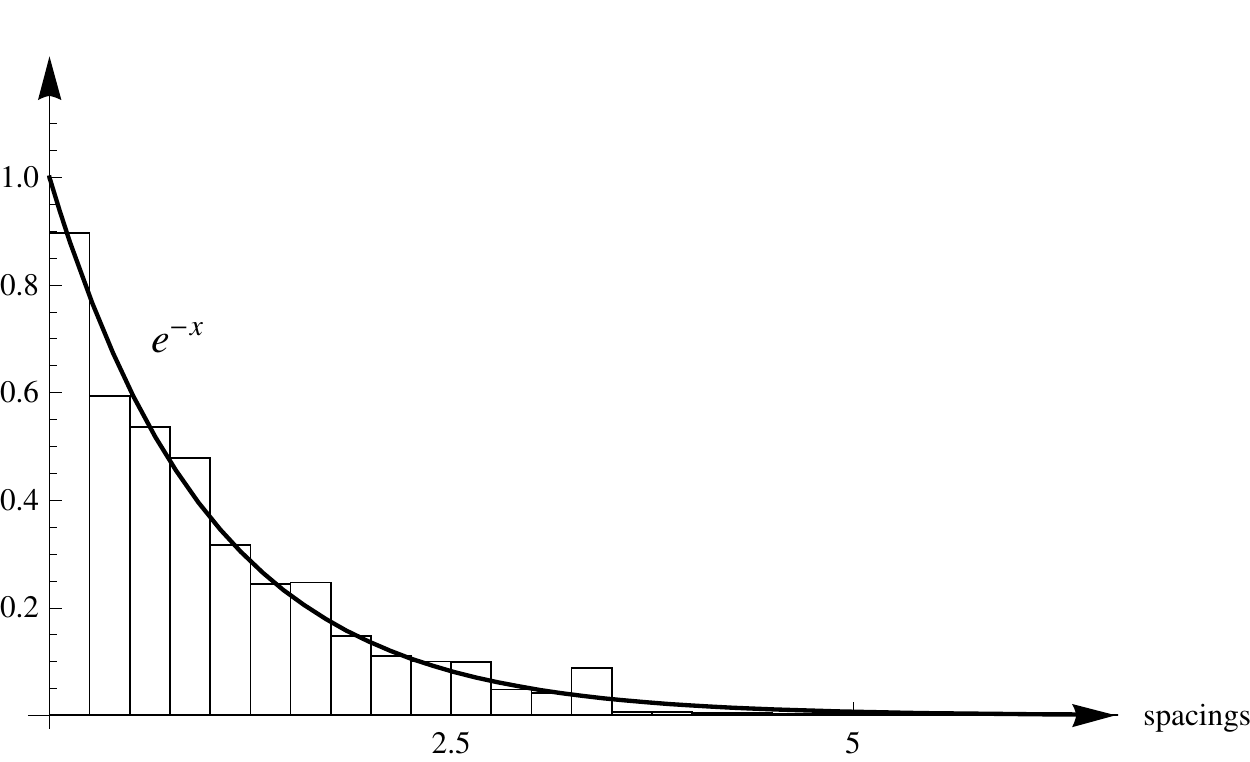}
\caption{\label{fig:XY} XY chain of $n=22$ spins with free ends. Left: Distribution of the rescaled energy levels; the solid line is the limiting Gaussian density~\eqref{eq:DEL_XY}. Right: spacing distribution for the unfolded spectrum; the solid line is the negative exponential $\exp(-x)$ (no fit).}
\end{figure}
\begin{figure}[t]
\centering
\includegraphics[width=.49\columnwidth]{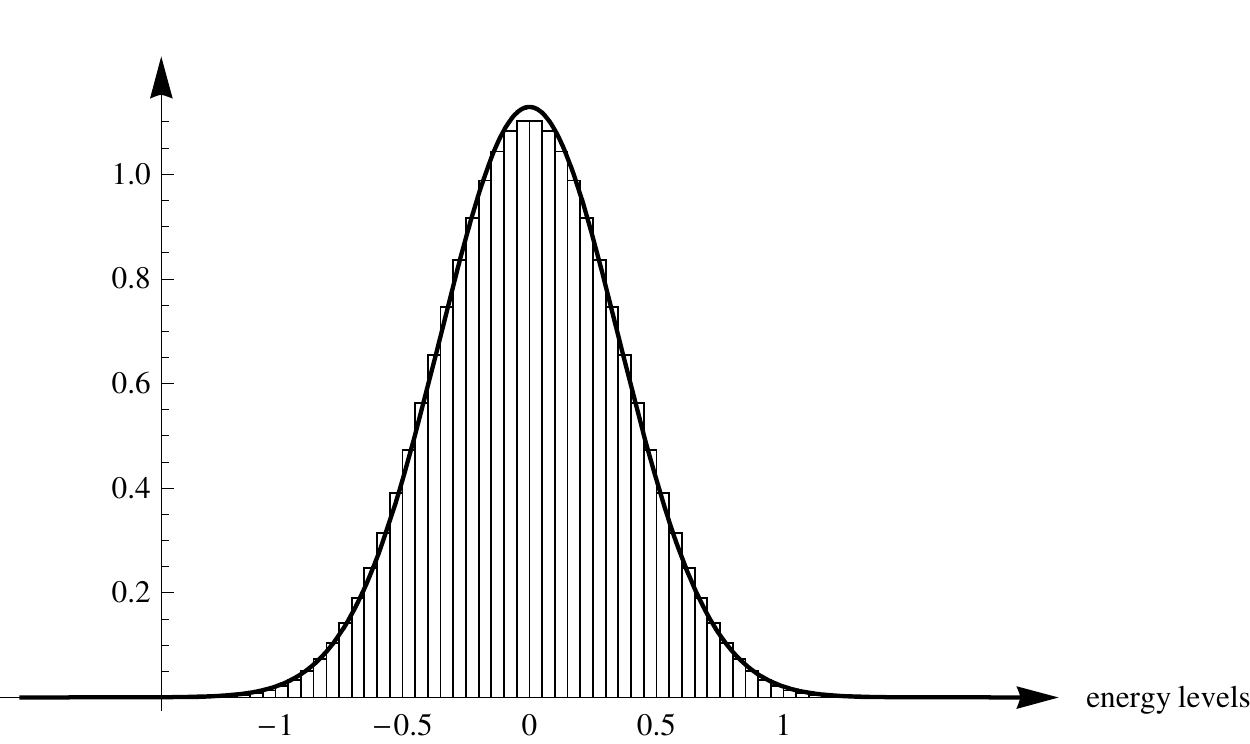}
\includegraphics[width=.49\columnwidth]{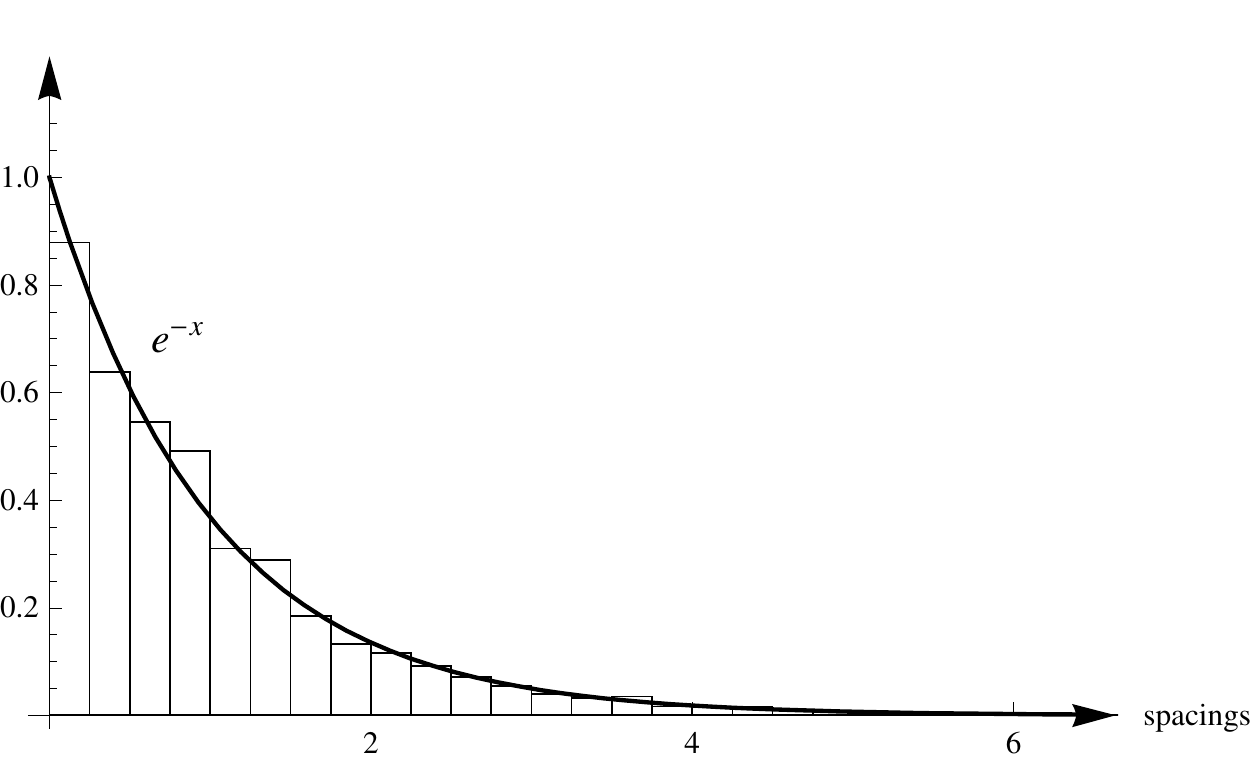}
\caption{\label{fig:Gaussian} Random quadratic form with $\iid$ Gaussian coefficients. Here $n=22$ and $s=1$. Left: Distribution of the  rescaled energy levels;  the solid line is the limiting Gaussian density~\eqref{eq:DEL_Gauss}. Right: spacing distribution for the unfolded spectrum; the solid line is the negative exponential $\exp(-x)$ (no fit).}
\end{figure}
One of the most commonly studied statistical measure of a given spectrum is the level spacing distribution $P(x)$, i.e., the distribution of gaps between consecutive levels. The first step to unravel meaningful information from the spacings is to unfold the spectrum in such a way that the average level spacing in the neighbourhood of each transformed level is unity. In other words, the unfolding procedure is the scaling transformation that removes the irrelevant effects of the varying local mean density. A natural way to unfold the spectrum is by mapping each level $E_{k,n}$ into a new variable $e_{k,n}$ defined as the fraction of energy levels in the spectrum below $E_{k,n}$. In practice, the variation of the density of levels needed for the unfolding is included by fitting the integrated level density or, when explicitly known, by using the limiting level density as an approximation.

We have numerically studied the level spacing distribution for a few instances of quadratic Fermi operators. Fig.~\ref{fig:XY} reports our findings for the XY chain with $n=22$ spins and free boundary conditions. As  illustrated in the left panel, the histogram representing the numerical empirical measure of the energy levels is almost indistinguishable from the limiting Gaussian density. For this reason we have used the limiting Gaussian density in~\eqref{eq:DEL_XY} to unfold the spectrum. $P(x)$ of the unfolded spectrum is shown on the right panel of Fig.~\ref{fig:XY} (we considered about $10^5$ levels in the bulk of the spectrum). We observe that $P(x)$ is maximum at $x=0$ indicating level clustering and it is likely to be the negative exponential $P(x)\simeq e^{-x}$ characteristic of the Poisson process.  We have also studied other spin models obtaining similar results.   This was to be expected since the XY model and its variants are integrable. Poisson statistics have also been numerically observed in previous works for other spin systems integrable by Bethe ansatz, including the Heisenberg chain, the t-J model and the Hubbard model. See, e.g., ~\cite{Poilblanc93}.

We have performed the same investigation for random quadratic forms with independent Gaussian coefficients (see Section~\ref{sec:Ginibre}), where the elementary excitations $\lambda_{k,n}$ of the normal modes are distributed as the singular values of the real Ginibre ensemble~\eqref{eq:repres}. Our findings are reported in Fig.~\ref{fig:Gaussian}. Again, the Gaussian limit~\eqref{eq:DEL_Gauss} is a convincing approximation of the numerical level density even for moderate values of $n$ (left panel). The level spacing in the unfolded spectrum (about $10^5$ levels in the bulk) is well described by a negative exponential. Note that the elementary excitations $\lambda_{k,n}$ repel as the eigenvalues of random matrices (see eq. \eqref{eq:jointlambda}); nevertheless, the energy levels $E_{k,n}$ are given by the subset sums of the $\lambda_{k,n}$'s and this structure dominates the repulsion and enhances the presence of small gaps. At first, this result may be surprising for those working in the field of random matrices or spectral theory of disordered systems. For generic chaotic systems one usually expects level repulsion. We felt natural to provide a theoretical argument to explain the `lack of repulsion' for disordered quasifree fermions.

As argued theoretically by Berry and Tabor~\cite{Berry77}, the energy spectrum of a classically integrable Hamiltonian system represents a sequence of completely uncorrelated numbers and the spectral fluctuations obey Poissonian statistics. The original argument in~\cite{Berry77} is based on the fact that for integrable systems it is possible to perform a canonical transformation into action-angle coordinates. The semiclassical approximation consists in quantizing the action variables so that the quantum energy levels of a classically integrable system are given by the classical Hamiltonian evaluated at points of a lattice (in some cases this quantization rule is exact). Therefore, the level spacings or, more generally, the number statistics of energy levels are related to the problem of counting the number of lattice points enclosed by the Hamiltonian level sets. A computation based on Poisson summation formula then suggests that $P(x)\simeq \exp(-x)$ for generic integrable systems.  This scheme applies only to `generic' systems, and some notable exceptions are quite well known.

Later, this way of reasoning has been extended beyond Hamiltonian mechanics. For instance, the standard argument for Poisson statistics in the case of spin integrable models is as follows~\cite{Poilblanc93}. If a Bethe ansatz holds, the energy levels of the systems are characterised by a set of quasimomenta (that reduce to real momenta for noninteracting spin systems). Typically, these quasi-momenta are the solutions of a set of non-linear equations and therefore the possible quasi-momenta are likely to repel one another, namely they lie on a quasilattice. The level statistics again reduces to the statistics of the lattice positions and the same argument as~\cite{Berry77} leads to Poisson
statistics.

Coming back to the models considered in this paper, we observe that quadratic forms in Fermi operators describe systems of noninteracting Fermi oscillators and are integrable via an exact normal modes decomposition. The existence of the normal modes for quasifree fermions corresponds to the existence of action-angle variables in Hamiltonian mechanics and quasi-momenta in the Bethe ansatz solutions for spin systems. The presence of disorder, e.g., randomness in the parameters, is immaterial regarding the integrability of the model. This explains why spectra of generic quadratic Fermi operators, even with randomness, should follow Poisson statistics.

A more quantitative argument explaining the Poisson statistics for quasifree fermions is based on the idea of `superposition of independent spectra' of Rosenzweig and Porter~\cite{Rosenzweig60} and  Berry and Robnik~\cite{Berry84}.  Note that the Hamiltonian~\eqref{eq:normalmodes} commutes with the number operator $N=\sum_{k}\eta_k^\d\eta_k$ and therefore $\HH_n$ can be block-diagonalized in such a way that each block corresponds to a sector of the Hilbert space with a fixed number $m$ of particles (or number of excited modes), where $m=0,\dots,n$. The sector labeled by $m$ contains $\binom{n}{m}$ eigenstates whose eigenvalues are given by the subset sums over sets of cardinality $m$. In formulae, the level density~\eqref{eq:counting1} can be written as a superposition of $(n+1)$ spectra
\be
\frac{1}{2^n}\sum_{k=1}^{2^n}\delta(E-E_{k,n})=\frac{1}{n+1}\sum_{m=0}^n\mu_n^{(m)}
\ee
where the $m$-particles energy density $\mu_n^{(m)}$ is the normalised counting measure on the $\binom{n}{m}$ energy levels of the $m$-sector. 
The idea now is to compute the gap probability, i.e., the probability of finding no level in a given interval.
Let us consider a large number $L$ of individual spectra $\mu_n^{(m)}$ where $m$, the number of particles, goes off to infinity as $n$ does. 
If one makes the assumption that the individual spectral $\mu_n^{(m)}$ are almost uncorrelated, so that the global gap distribution almost factorizes, using the limit theorem in~\cite{Rosenzweig60,Berry84} one concludes that the gap probability (and hence the level spacing distribution) is given by a negative exponential. We have not been able to carry
out a rigorous analysis of this naive reasoning.

As in the case of Hamiltonian systems, it is not difficult to exhibit exceptional quasifree fermion models deviating from the expected Poisson statistics. One exceptional model is presented below.
\begin{example}
\label{ex:3}
Consider again the model \eqref{eq:ham_ex1}of Examples~\ref{ex:1} and~\ref{ex:2} with $\xi$ fixed or random. One immediately sees that the energy gaps between consecutive levels is constant $E_{k+1,n}-E_{k,n}=\xi$. Hence, the level spacing distribution after the unfolding of the spectrum (neglecting degeneracy of levels) is a delta measure centred at $1$. It is easy to verify that this model does not satisfy the conditions for the limiting theorem on superposition of independent spectra~\cite{Rosenzweig60,Berry84}.

\end{example}

\section*{Acknowledgements}
FDC, AM and FM acknowledge  support  from EPSRC Grant No.\ EP/L010305/1. FDC acknowledges partial support from the Italian National Group of Mathematical Physics (GNFM-INdAM). AM acknowledges the support of the Leverhulme Trust Early Career
Fellowship (ECF 2013-613). FDC is grateful to Michael Bromberg for helpful conversations connected to this project. The authors would like to thank Jens Marklof for useful discussions and Jon P. Keating for his comments on the manuscript.


\begin{thebibliography}{99}
\bibliographystyle{vancouver}

\bibitem{Amico08} L. Amico, R. Fazio, A. Osterloh and V. Vedral,
\emph{Entanglement in many-body systems},
Rev. Mod. Phys. {\bf 80}, 517-576 (2008).

\bibitem{Anderson58} P. W. Anderson,
\emph{Absence of Diffusion in Certain Random Lattices},
Phys. Rev. {\bf 109}, 1492 (1958).

\bibitem{Atas14} Y. Y. Atas and E. Bogomolny,
\emph{Spectral density of a one-dimensional quantum Ising model: Gaussian and multi-Gaussian
approximations},
J. Phys. A: Math. Theor. {\bf 47}, 335201 (2014).

\bibitem{Atas14b} Y. Y. Atas,
\emph{Quelques aspects du chaos quantique dans les syst\`emes
de $N$-corps en interaction: cha\^ines de spins quantiques
et matrices al\'eatoires}, Th\`ese de doctorat, Universit\'e Paris Sud - LPTMS (2014).


\bibitem{Bai88} Z. D. Bai, J. Silverstein and Y. Q. Yin,
\emph{A note on the largest eigenvalue of a large dimensional sample covariance matrix},
J. Multivariate Anal. {\bf 26}, 166-168  (1988).

\bibitem{Bandeira16} A. S. Bandeira and R. van Handel,
\emph{Sharp nonasymptotic bounds on the norm of random matrices with independent entries},
Ann. Probab. {\bf 44}, 2479-2506 (2016).


\bibitem{Berkovits96}
R. Berkovits and Y. Avishai,
\emph{Spectral statistics near the quantum percolation threshold},
Phys. Rev. B {\bf 53}, R16 125 (1996).

\bibitem{Berry77} M. V. Berry and M. Tabor,
\emph{Level clustering in the regular spectrum},
Proc. R. Soc. Lond. A {\bf 356}, 375-394 (1977).

\bibitem{Berry84} M. V. Berry and M. Robkin,
\emph{Semiclassical level spacings when regular and chaotic orbits coexist},
J. Phys. A: Math. Gen. {\bf 17}, 2413 (1984).

\bibitem{Calabrese09} P. Calabrese, J. Cardy and B. Doyon,
\emph{Entanglement entropy in extended quantum systems},
J. Phys. A: Math. Theor. {\bf 42}, 500301 (2009).

\bibitem{Edelman98} A. Edelman,
\emph{Eigenvalues and condition numbers of random matrices},
SIAM J. Matrix Anal. Appl. {\bf 9}, 543 (1988).

\bibitem{Eisert10} J. Eisert, M. Cramer and M. B. Plenio,
\emph{Area laws for the entanglement entropy},
Rev. Mod. Phys. {\bf 82}, 277-306  (2010).

\bibitem{Erdos14} L. Erd\"{o}s and D. Schr\"{o}der,
\emph{Phase Transition in the Density of States of Quantum Spin Glasses},
Math. Phys. Anal. Geom. {\bf 17}, 441-464 (2014).


\bibitem{Feller71} W. Feller,
\emph{An Introduction to Probability Theory and Its Applications, Volume II}, 2nd Edition, John Wiley and Sons, New York, (1971).

\bibitem{Fisher94} D. S. Fisher,
\emph{Random antiferromagnetic quantum spin chains},
Phys. Rev. B {\bf 50}, 3799 (1994).

\bibitem{Forrester10} P. Forrester,
\emph{Log-Gases and Random Matrices},
London Mathematical Society Monographs, (2010).

\bibitem{Geman80} S. Geman,
\emph{A limit theorem for the norm of random matrices},
Ann. Probab. {\bf 8}, 252-261  (1980).


\bibitem{Ginibre65} J. Ginibre,
\emph{Statistical ensembles of complex, quaternion, and real matrices},
J. Math. Phys. {\bf 6}, 440-449 (1965).

\bibitem{Hartmann04} M.  Hartmann, G. Mahler and O. Hess,
\emph{Gaussian quantum fluctuations in interacting many particle systems},
Lett. Math. Phys. {\bf 68}, 103-112 (2004);
\emph{Spectral densities and partition functions of modular quantum systems as derived from a central limit theorem},
J. Stat. Phys. {\bf 119}, 1139-1151 (2005).

\bibitem{Keating04}
J. Keating and F. Mezzadri,
\emph{Random Matrix Theory and Entanglement in Quantum Spin Chains},
Commun. Math. Phys. {\bf 252}, 543-579 (2004).

\bibitem{Keating14} J. P. Keating, N. Linden and H. J. Wells,
\emph{Random matrices and quantum spin chains},
Markov Processes and Related Fields {\bf 21}, 537-555 (2014).

\bibitem{Keating15} J. P. Keating, N. Linden and H. J. Wells,
\emph{Spectra and eigenstates of spin chain Hamiltonians},
Commun. Math. Phys. {\bf 338},  81-102 (2015).

\bibitem{Lieb61}
E. Lieb, T. Schultz and D. Mattis,
\emph{Two Soluble Models of an Antiferromagnetic Chain},
Ann. Phys. {\bf 16}, 407-466 (1961).

\bibitem{Maltsev16} F. Benaych-Georges and A. Maltsev,
\emph{Fluctuations of linear statistics of half-heavy-tailed random matrices},
Stochastic Processes and their Applications (2016) doi:10.1016/j.spa.2016.04.030.


\bibitem{MP67} V.~A. Mar\v{c}enko and L.~A. Pastur,
\emph{Distribution of eigenvalues for some sets of random matrices},
Math. USSR-Sb. \textbf{1}, 457 (1967).


\bibitem{Mehta} M.~L. Mehta,
\emph{Random Matrices}, 3rd Edition,
Elsevier-Academic Press, (2004).

\bibitem{Nielsen05}  M. A. Nielsen,
\emph{The fermionic canonical commutation relations and the Jordan–Wigner transform},
School of Physical Sciences - The University of Queensland (2005).

\bibitem{Pastur14} L. Pastur and V. Slavin,
\emph{Area Law Scaling for the Entropy of Disordered Quasifree Fermions},
Phys. Rev. Lett. {\bf 113}, 150404 (2014).

\bibitem{Poilblanc93} D. Poilblanc T . Ziman, J . Bellisard, F. Mila and G. Montanbaux,
\emph{Poisson vs. GOE Statistics in Integrable and Non-Integrable Quantum Hamiltonians},
Europhys. Lett. {\bf 22}, 537-542 (1993).


\bibitem{Prosen08} M. \v{Z}nidari\v{c}, T. Prosen,1 and P. Prelov\v{s}ek,
\emph{Many-body localization in the Heisenberg XXZ magnet in a random field},
Phys. Rev. B {\bf 77}, 064426 (2008).

\bibitem{Refael04} G. Refael and J. E. Moore,
\emph{Entanglement Entropy of Random Quantum Critical Points in One Dimension},
Phys. Rev. Lett. {\bf 93}, 260602 (2004);
\emph{Criticality and entanglement in random quantum systems},
J. Phys. A: Math. Theor. {\bf 42},  504010 (2009).

\bibitem{Rosenzweig60} N. Rosenzweig and C. E. Porter,
\emph{``Repulsion of Energy Levels'' in Complex Atomic Spectra},
Phys. Rev. {\bf 120}, 1698 (1960).


\bibitem{Schmidtke14}
D. Schmidtke, A. Khodja and J. Gemmer,
\emph{Transport in tight-binding bond percolation models},
Phys. Rev. E {\bf 90}, 032127 (2014).

\bibitem{Sen09} A. K. Sen, K. K. Bardhan and B. K. Chakrabarti (Eds.),
\emph{Quantum and Semi-classical Percolation and Breakdown in Disordered Solids},
Lect. Notes Phys. {\bf 762} (Springer, Berlin Heidelberg 2009).

\bibitem{tv-review}
    T. Tao and V. Vu,
     \emph{Random matrices: the four-moment theorem for {W}igner
              ensembles}. Random matrix theory, interacting particle systems, and
              integrable systems. Math. Sci. Res. Inst. Publ.
 {\bf 65}, 509--528 (Cambridge Univ. Press, New York 2014).
  

\bibitem{Verbaarschot00} J. J. M. Verbaarschot and T. Wettig,
\emph{Random Matrix Theory and chiral
symmetry in QCD},
Annu. Rev. Nucl. Sci. {\bf 50}, 343–410 (2000).

\bibitem{Wells14} H. J. Wells,
\emph{Quantum spin chains and random matrix theory},
PhD dissertation, University of Bristol (2014).


\bibitem{Yin86}
Y. Q. Yin,
 \emph{Limiting spectral distribution for a class of random matrices},
 J. Multivariate Anal. {\bf 20}, 50–68  (1986).


\end{thebibliography}
\end{document}